  \let\upDelta\Updelta
  \let\upLambda\Uplambda
  \let\maybesf\sffamily
  \let\maybesf\rmfamily
  \newcommand*\patchAmsMathEnvironmentForLineno[1]{%
    \expandafter\let\csname old#1\expandafter\endcsname\csname
  #1\endcsname
    \expandafter\let\csname oldend#1\expandafter\endcsname\csname
  end#1\endcsname
    \renewenvironment{#1}%
       {\linenomath\csname old#1\endcsname}%
       {\csname oldend#1\endcsname\endlinenomath}}%
  \newcommand*\patchBothAmsMathEnvironmentsForLineno[1]{%
    \patchAmsMathEnvironmentForLineno{#1}}%
\numberwithin{equation}{section}
\newcounter{and}
\newcommand{\institute}[1]{\newcommand{\@institute}{#1}}
\newcommand{\inst}[1]{\unskip\smash{$^#1$}\setcounter{and}{1}\ignorespaces}
\newcommand{\email}[1]{\href{mailto:#1}{#1}}
\renewcommand{\maketitle}{
  { % title
    \raggedright
    \LARGE
    \noindent
    \bfseries
    \maybesf
    \@title
    \par
  }

  \vspace{1.5\baselineskip}

  { % author
    \raggedright
    \renewcommand{\and}{
      \unskip, \ignorespaces
    }
    \noindent\ignorespaces\@author\par
  }

  \vspace{0.5\baselineskip}

  { % institute
    \raggedright
    \small
    \renewcommand{\and}{
      \par\stepcounter{and}
      \hangindent .8em\noindent
      \hbox to .8em{\smash{$^{\arabic{and}}$}}\ignorespaces
    }
    \ifnum\value{and}=0
      \noindent
    \else
      \hangindent .8em\noindent
      \hbox to .8em{\smash{$^{\arabic{and}}$}}\ignorespaces
    \fi
    \ignorespaces\@institute\par
  }
}
\renewenvironment{abstract}{
  \addvspace{1.5\baselineskip}
  \topsep=0pt\partopsep=0pt
  \trivlist\item[\hskip\labelsep\bfseries\maybesf Abstract.]
}{}
\newenvironment{acknowledgments}{
  \addvspace{1.5\baselineskip}
  \topsep=0pt\partopsep=0pt
  \trivlist\item[\hskip\labelsep\bfseries\maybesf Acknowledgments.]
}{}
\newcommand{\latinabbr}{\textit}
\newcommand{\cf}{\latinabbr{cf.}\ }
\newcommand{\eg}{\latinabbr{e.g.}\ }
\newcommand{\ie}{\latinabbr{i.e.}}
\renewcommand{\th}{\ensuremath{^{\text{th}}}\ }
\newcommand{\st}{\ensuremath{^{\text{st}}}\ }
\newcommand{\nd}{\ensuremath{^{\text{nd}}}\ }
\newcommand{\kk}{{\vec{k}}}
\newcommand{\defn}{\doteq} % \equiv or \doteq or \coloneqq or ?
\newcommand{\e}{\mathrm{e}}
\newcommand{\im}{\mathrm{i}}
\newcommand{\field}[1][K]{\mathbb{#1}}
\newcommand{\NN}{\field[N]}
\newcommand{\RR}{\field[R]}
\renewcommand{\Re}{\operatorname{Re}}
\newcommand{\conj}[1]{\overline{#1}}
\DeclarePairedDelimiter{\abs}{\lvert}{\rvert}
\DeclarePairedDelimiter{\norm}{\lVert}{\rVert}
\DeclarePairedDelimiter{\opnorm}{\lVert}{\rVert_{\mathrm{op}}}
\newcommand{\order}[2][]{\mathrm{O}#1( #2 #1)}
\DeclarePairedDelimiter{\E}{\langle}{\rangle} % Expectation value
\newcommand{\norder}[1]{\mathord{:}{#1}\mathord{:}} % Normal order
\newcommand{\dif}{\mathrm{d}}
\newcommand{\gdif}{\mathrm{d}} % Gâteaux derivative
\newcommand{\fdif}{\mathrm{D}} % Fréchet derivative
\newcommand{\od}[3][]{\frac{\dif^{#1}#2}{\dif#3^{#1}}}
\newtheoremstyle{theoremsf}{}{}{\itshape}{}{\bfseries\maybesf}{.}{\labelsep}{}
\newtheoremstyle{definitionsf}{}{}{}{}{\bfseries\maybesf}{.}{\labelsep}{}
\theoremstyle{theoremsf}
\newtheorem{theorem}{Theorem}[section]
\newtheorem{lemma}[theorem]{Lemma}
\newtheorem{proposition}[theorem]{Proposition}
\theoremstyle{definitionsf}
\newtheorem{definition}{Definition}[section]
\definecolor{hypercolor}{rgb}{0,0.2,0.7}
\newcommand{\Hadamard}{\mathcal{H}}
\newcommand{\KG}{\mathrm{P}}            % Klein--Gordon operator
\newcommand{\Newton}{\mathrm{G}}        % Newton constant
\def\noqed{\renewcommand{\qedsymbol}{}}
\let\emph\textbf
\begin{document}

\title{Global Existence of Solutions of the Semiclassical Einstein Equation for Cosmological Spacetimes}

\author{
  Nicola Pinamonti\inst{1}\inst{2}
	\and
  Daniel Siemssen\inst{1}
}

\institute{
  Dipartimento di Matematica, Università di Genova, Via Dodecaneso 35,
  16146 Genova, Italy.
	\and
  INFN Sezione di Genova, Via Dodecaneso 33,
  16146 Genova, Italy.
	\\
  E-Mail: \email{pinamont@dima.unige.it}, \email{siemssen@dima.unige.it}
}

\maketitle

%----------------------------------------------------------------------------%

\begin{abstract}
  We study the solutions of the semiclassical Einstein equation in flat cosmological spacetimes driven by a massive conformally coupled scalar field.
  In particular, we show that it is possible to give initial conditions at finite time to get a state for the quantum field which gives finite expectation values for the stress-energy tensor.
  Furthermore, it is possible to control this expectation value by means of a global estimate on regular cosmological spacetimes.
  The obtained estimates permit to write a theorem about the existence and uniqueness of the local solutions encompassing both the spacetime metric and the matter field simultaneously.
  Finally, we show that one can always extend local solutions up to a point where the scale factor $a$ becomes singular or the Hubble function $H$ reaches a critical value $H_c = 180 \uppi / \Newton$, which both correspond to a divergence of the scalar curvature $R$, namely a spacetime singularity.
\end{abstract}

%----------------------------------------------------------------------------%

\section{Introduction}

In this paper we shall prove the existence of a unique solution to the \emph{semiclassical Einstein equation} (choosing $8 \uppi\, \Newton = c = \hbar = 1$) with cosmological constant $\Lambda$
\begin{equation}\label{eq:semiclassical}
  G_{ab} + \Lambda g_{ab}= \E{T_{ab}}_\omega
\end{equation}
in the class $\mathcal{O}$ of \emph{homogeneous and isotropic spacetimes} which are also spatially flat and where the matter is described by a massive quantum scalar field conformally coupled with the curvature.
In \cite{pinamonti:2011} one of the authors already showed the local existence for short time-intervals of solutions to \eqref{eq:semiclassical} in Friedman--Lemaître--Robertson--Walker (FLRW) spacetimes with a lightlike initial surface.
In the present paper we shall instead put initial values for matter and gravity at \emph{finite} time, thus describing a more physical and general situation, and prove local and global existence as well as uniqueness of solutions of \eqref{eq:semiclassical}.
However, as we will see shortly, there is a price to pay: the resulting matter state will, in general, not have the full regularity of a Hadamard state.

A systematic analysis of the semiclassical Einstein equation in connection with the problem of regularizing the stress-energy tensor has been given by Wald in \cite{wald:1977}.
Recently, Eltzner and Gottschalk analyzed in \cite{eltzner:2011} the problem of backreaction for scalar quantum fields with generic coupling showing that it gives rise to a dynamical problem which, in principle, can be analyzed numerically.
Similar problems as those discussed in the present paper have been studied by Anderson using methods proper of numerical analysis in a series of papers~\cite{anderson:1983,anderson:1984,anderson:1985,anderson:1986}.

\subsection{Friedman--Lemaître--Robertson--Walker spacetimes}

An element $(M,g)$ of $\mathcal{O}$ is described by the manifold $M = I \times \RR^3$, where $I \subset \RR$ is a connected open interval, and $M$ is equipped with the well-known FLRW metric tensor
\begin{equation*}
  g = - \dif t \otimes \dif t + a(t)^2\, \dif x^i \otimes \dif x^i,
\end{equation*}
where $t$ is the \emph{cosmological time} and $0 < a \in C^2(I)$ is the \emph{scale factor}, the single dynamical degree of freedom.
A very useful feature of this class is the fact that all the elements of $\mathcal{O}$ are conformally flat.
Indeed, for every $(M, g)$ in $\mathcal{O}$, if we use the \emph{conformal time}
\begin{equation}\label{eq:conformal-time}
  \tau \defn \tau_0 - \int_{t_0}^t \frac{1}{a(t')}\, \dif t'
\end{equation}
to parametrize the points of $M$, then $a$ becomes the conformal factor of a conformal transformation which connects $g$ with the flat Minkowski metric $\eta$: $g = a^2 \eta$.
As is clear from the definition, the conformal time is defined up to the choice of $\tau_0$.
Once $t_0$ and $\tau_0$ are fixed and $a$ is known, we can always invert the relation \eqref{eq:conformal-time} to obtain $t$ as a function of $\tau$.
We shall assume this point of view and characterize elements of $\mathcal{O}$ by their scale factor $a(\tau)$ as a function of conformal time.

Since the metrics of the elements of $\mathcal{O}$ have only one degree of freedom, the semiclassical Einstein equations \eqref{eq:semiclassical} can be written in Friedmann form:\footnote{We denote by $\dot f$ the derivative of a function $f$ with respect to cosmological time $t$ and by $f'$ the derivative with respect to conformal time $\tau$.}
\begin{equation*}
  \rho =  3 H^2 - \Lambda,
  \qquad
  p =  - 2 \dot H - 3 H^2 + \Lambda,
\end{equation*}
where $\rho = -{T_0}^0$ is the \emph{energy density}, $p = {T_i}^i/3$ the \emph{pressure} and $H = \dot a / a$ the \emph{Hubble function}.
If the stress-energy tensor is conserved, the previous equations are equivalent to
\begin{equation}\label{eq:friedmann-trace}
  T \defn g_{ab} T^{ab} = - \rho + 3 p = - 6 (\dot H + 2 H^2) + 4 \Lambda
\end{equation}
up to an initial condition which can be fixed by means of the first Friedmann equation $- \rho(\tau_0) = - 3 H(\tau_0)^2 + \Lambda$ evaluated at the initial time $\tau = \tau_0$.

\subsection{Quantum scalar fields}

We are interested in analyzing the role of the quantum nature of $T_{ab}$, to this end we shall consider it as being the stress-energy tensor of a quantized, conformally coupled, massive scalar field with the equation of motion
\begin{equation}\label{eq:eom}
  - \Box \varphi + \frac{1}{6} R \varphi + m^2 \varphi \defn \KG \varphi = 0.
\end{equation}
We recall to the reader that on smooth globally hyperbolic spacetimes there exist unique advanced and retarded fundamental solutions to this equation \cite{bar:2007}.
However, we cannot guarantee the smoothness of the scale factor (and thus the metric) and indeed we will require it only to be $C^1$.
Thanks to the symmetry of the spacetimes in $\mathcal{O}$, the lack of smoothness is not an issue.
In fact, using the Fourier transform, one can construct the unique advanced and retarded fundamental solution explicitly in this case.

The quantization of this scalar field can be performed constructing the algebra generated by the quantum field $\varphi$ \cite{haag:1996,brunetti:2003,hollands:2001}.
In particular, on every smooth $(M, g) \in \mathcal{O}$ we construct the algebra of quantum fields $\mathcal{A}(M, g)$ as the $*$-algebra generated by the formal linear fields
\begin{equation*}
  \varphi(\KG f) = 0,
  \qquad
  \varphi(f)^* = \varphi(\conj{f}),
  \qquad
  [\varphi(f),\varphi(h)] = \im \upDelta(f,h)
\end{equation*}
where $f, h \in \mathcal{D}(M)$ and $\upDelta$ is the unique causal propagator on $(M, g)$, defined as the advanced minus the retarded fundamental solution of \eqref{eq:eom}.
Hence this quantization procedure
\begin{equation*}
  (M, g) \mapsto \mathcal{A}(M, g)
\end{equation*}
is functorial \cite{brunetti:2003}.
The extension of this construction to less regular spacetimes in $\mathcal{O}$ is possible provided a suitable causal propagator exists.

In this picture the expectation values of the fields can be obtained once a state is fixed.
In the algebraic language, a state is a positive, normalized, linear functional over $\mathcal{A}(M, g)$.
Unfortunately, there is no preferred way to select a quantum state in a generic curved space. It is thus not possible to associate a quantum state to the algebra $\mathcal{A}(M, g)$ naturally \cite{fewster:2012}.
However, since in this paper we are interested only in a very special kind of spacetime, we can find a way to associate a special quantum state to every element of $\mathcal{O}$.
More precisely, we will choose a state which resembles the properties of a vacuum at an initial time $\tau_0$.

For physical quasi-free states on smooth globally hyperbolic spacetimes, one often requires that they satisfy the \emph{Hadamard condition} (\cf~\cite{wald:1977}), \ie, in a convex geodesic neighborhood their two-point distribution satisfies
\begin{equation}\label{eq:hadamard_state}
  \omega_2(x,y) = \Hadamard(x, y) + W(x,y) = \lim_{\varepsilon \to 0^+} \frac{U(x,y)}{\sigma_{\varepsilon}(x,y)} + V(x,y) \ln \left( \frac{\sigma_{\varepsilon}(x,y)}{\lambda^2} \right) + W(x,y),
\end{equation}
where $\sigma_{\varepsilon}$ is the $\varepsilon$-regularized, squared, signed geodesic distance between the points $x$ and $y$.
Furthermore, $U$ and $V$ are smooth functions on $M \times M$ determined only by the metric and the equation of motion~\eqref{eq:eom}, and $\Hadamard(x,y)$ is called \emph{Hadamard singularity}.
The smooth function $W$ defined on $M \times M$ characterizes the state and must be chosen such that $\omega_2$ is a positive bisolution of \eqref{eq:eom}.
A lot of progress on the understanding of Hadamard states has been made since it was shown that the Hadamard condition is equivalent to a condition on the wavefront set of $\omega_2$: It must satisfy the \emph{microlocal spectrum condition}~\cite{brunetti:1996,radzikowski:1996,sahlmann:2001}.

Yet, in this paper we have to relax the requirement to consider only Hadamard states.
This will be necessary for two reasons: On one hand, the elements of $\mathcal{O}$ are not smooth and, on the other hand, the quantum states we shall employ are not Hadamard, even on smooth spacetimes.
If one also admits non-smooth $W$ in \eqref{eq:hadamard_state}, the microlocal spectrum condition will fail to hold but can be generalized to a condition on the Sobolev wavefront sets of $\omega_2$ \cite{junker:2002}.
The resulting states are called \emph{adiabatic states} \cite{parker:1969,luders:1990,junker:2002}.
For adiabatic states of order zero $W$ is only a continuous function.

Following \cite{brunetti:2000,hollands:2001}, we can then regularize the two-point distribution of a Hadamard state (or an adiabatic state) just considering
\begin{equation*}
  \omega(\norder{\varphi(x)\varphi(y)}) \defn \omega_2(x,y) - \Hadamard(x, y) = W(x,y).
\end{equation*}
In this manner we can construct the expectation value of the Wick square $\varphi^2$ or of the stress-energy tensor $T_{ab}$ just applying a suitable operator on the previous expression and then taking the coinciding point limit.
This procedure is called \emph{point-splitting regularization}, see~\cite{bunch:1978,moretti:2003,hollands:2001} for further details.

However, henceforth we consider spacetimes in the set $\mathcal{O}$ which have in general no smooth metric.
For non-smooth metrics the Hadamard construction fails but due to the high symmetry of the spacetime it is still possible to construct states which resemble adiabatic states of order $0$.
Moreover, one can introduce a regularization procedure which coincides with the usual point-splitting method for adiabatic states on smooth spacetimes.
For a detailed discussion see section~\ref{sec:state}.

\subsection{The semiclassical Einstein equation}

Having reviewed these standard results, we are now ready to rewrite the semiclassical Einstein equation \eqref{eq:semiclassical} for all \emph{smooth} spacetimes in $\mathcal{O}$ as a semiclassical version of \eqref{eq:friedmann-trace}:
\begin{equation}\label{eq:friedmann-trace-semiclassical}
  - 6 (\dot H + 2 H^2) + 4 \Lambda = \E{T}_\omega.
\end{equation}
The trace of the quantum stress-energy tensor for a conformally coupled scalar field is (see~\cite{dappiaggi:2008,moretti:2003} for further details):
\begin{equation}\label{eq:trace_T}
  \E{T}_\omega = \omega(\norder{T}) + T_{\mathrm{RF}} = - m^2 \omega(\norder{\varphi^2}) + 2 [V_1] - \alpha\, m^4 - \beta\, m^2 R - \gamma\, \Box R,
\end{equation}
where $\omega(\norder{\varphi^2}) = [W]$ is due to the state-dependent part of the state (\cf~\eqref{eq:hadamard_state}) and the contribution
\begin{align*}
  2 [V_1] & = \frac{1}{8 \uppi^2} \left( -\frac{H^2}{30} (\dot H + H^2) + \frac{1}{360} \Box R + \frac{m^4}{4} \right)
\end{align*}
is usually called \emph{trace anomaly} \cite{wald:1978}.
Moreover, $[\,\cdot\,]$ denotes the Synge bracket of a bitensor: $[B(x,y)] \defn \lim_{y \to x} B(x,y)$.

The constants $\alpha, \beta$ and $\gamma$ appearing in \eqref{eq:trace_T} are renormalization constants which come from the renormalization freedom $T_{RF}$ to choose a state-independent, local normal ordering prescription $\norder{\,\cdot\,}$.
According to \cite{brunetti:2003,hollands:2001}, they have to be fixed once and for all for every element of $\mathcal{A}(M)$ with $M \in \mathcal{O}$ in accordance with experiments.
In order to describe solutions of the semiclassical equation~\eqref{eq:friedmann-trace-semiclassical}, we shall fix them according to the following rules:

We will choose $\gamma$ in such a way as to cancel higher order derivatives of the metric coming from the trace anomaly.
Following \cite{wald:1977}, this is necessary because we want to have a well-posed initial value problem without adding extra initial conditions.
Removing this term might not be suitable for describing the physics close to the initial Big Bang singularity.\footnote{In the Starobinsky model this term is the single term which is considered to drive a phase of rapid expansion close to the Big Bang, see the original paper of Starobinsky \cite{starobinsky:1980}, its further development \cite{kofman:1985} and also \cite{hack:2010,hack:2013} for a recent analysis.}
However, this is surely suitable to describe the physics in the regime where $\Box H \ll H^4$.

Regarding the physical meaning of the remaining two renormalization constants, we notice that changing $\alpha$ corresponds to a renormalization of the cosmological constant $\Lambda$, whereas a change of $\beta$ corresponds to a renormalization of the Newton constant $\Newton$.
For this reason we choose $\alpha$ in such a way that no contribution proportional to $m^4$ is present in $\E{T}_\omega$ and we set $\beta$ in order to cancel the terms proportional to $m^2 R$ in $\E{T}_\omega$ because this contribution is already fixed by the measured value of $\Newton$.
All in all, we choose the renormalization constants as\footnote{The value of $\beta$ is not zero because, as it will be clear later (\cf section~\ref{sub:wick_square} and in particular \eqref{eq:adiabatic_subtraction}, \eqref{eq:renormalized_wick_square}), there is a contribution proportional to $R$ in the expectation value of $\varphi^2$ for the chosen class of states.}
\begin{equation}\label{eq:renormalization}
  \alpha = \frac{1}{32 \uppi^2},
  \qquad
  \beta = \frac{1}{288 \uppi^2},
  \qquad
  \gamma = \frac{1}{2880 \uppi^2}.
\end{equation}

With all this in mind, we can write the semiclassical equation \eqref{eq:friedmann-trace-semiclassical}, in the following integral form
\begin{multline}\label{eq:functional-volterra}
  H(\tau) = H_0 + \int_{\tau_0}^\tau \frac{a(H)(\eta)}{H^2_c - H(\eta)^2} \Big( H(\eta)^4 - 2 H_c^2 H(\eta)^2 \\ + 240 \uppi^2 \big(m^2 \omega(\norder{\varphi^2})(H)(\eta) + \beta\, m^2 R(H)(\eta) + 4 \tilde\Lambda \big) \Big)\, \dif\eta,
\end{multline}
where $H_c^2 \defn 1440 \uppi^2 / (8 \uppi\, \Newton) = 180 \uppi / \Newton$ and $\tilde\Lambda \defn \Lambda / (8 \uppi\, \Newton)$.
Moreover, the initial condition $H_0 \defn H(\tau_0)$ is fixed (up to a sign) by the constraint
\begin{equation}\label{eq:constraint}
  H_0^2 = \frac{1}{3} \big(\rho(\tau_0)+\Lambda \big).
\end{equation}
The equation \eqref{eq:functional-volterra} can be rewritten as
\begin{equation}\label{eq:functional-volterra-short}
  H = F(H)
  \quad \text{with} \quad
  F(H)(\tau) \defn H_0 + \int_{\tau_0}^\tau f(H)(\eta)\, \dif\eta,
\end{equation}
where $f$ is a suitable functional of $H$ which also depends on $\tau_0$ through the state $\omega$.
This expression exhibits the structure of a Volterra integral equation where the integral kernel is an integro-differential operator.

Given a \emph{non-smooth} spacetime in $\mathcal{O}$, \eqref{eq:functional-volterra} is still well-defined if we have a suitable notion of the Wick square $\omega(\norder{\varphi^2})$.
Therefore, in the \hyperref[sec:state]{next section}, we will present the detailed construction of the states we are considering and we will discuss the well-posedness of the Wick square as a functional of $H$.
Then, in order to construct the functional $f$ explicitly, we have to choose a state for every element of $\mathcal{O}$ and analyze how the resulting Wick square $\omega(\norder{\varphi^2})$ in $f$ depends on $H$.

Afterwards, in the \hyperref[sec:local]{third section}, we will discuss the existence and uniqueness of solutions to \eqref{eq:functional-volterra}.
This problem amounts to finding fixed points of the map on the right hand side of \eqref{eq:functional-volterra}.
Using the Banach fixed-point theorem, this will be accomplished by showing that $F$ is a contraction map on a suitable closed subset of a Banach space.
To prove that $F$ is a contraction map we will use results derived in the \hyperref[sec:appendix]{Appendix} which essentially reduce the problem to showing that the first functional derivative of $\omega(\norder{\varphi^2})$ in $f$ can be controlled.

In the \hyperref[sec:global]{fourth section} we shall show that it is always possible extend the (local) solution provided that the functional $f$ remains bounded, thus proving global existence and uniqueness.
This result descends from the estimates for the Wick square given in the preceding sections.

We conclude this paper with some final comments in the \hyperref[sec:final]{fifth section}.

%----------------------------------------------------------------------------%

\section{Quantum states for the scalar field}
\label{sec:state}

As described above, in the present paper we would like to solve the semiclassical Einstein equation giving initial conditions at finite conformal time $\tau=\tau_0$.
For a coherent picture, we should give initial values for the state describing quantum matter at the same time using information about the metric and its first derivative.

It would be desirable to use Hadamard states as reference states.
In the literature there are a few examples of such states but unfortunately none of them are suitable for our purposes.
On FLRW spacetimes there is the notable construction of states of low energy given by Olberman~\cite{olbermann:2007} which are also Hadamard, see also~\cite{kusku:2008,hack:2013}.
But the employed construction is based on a smearing of the modes with respect to an extended function of time and, in principle, we do not even know if a spacetime, namely a solution of \eqref{eq:functional-volterra}, exists in the future of $\tau_0$.
Other constructions of Hadamard states require that the spacetime has certain asymptotic properties \cite{dappiaggi:2009a,dappiaggi:2009b,pinamonti:2011} which are not under control in the class $\mathcal{O}$.

The price we have to pay for working with non-smooth spacetimes is the reduced regularity of the obtained state.
More precisely, we cannot say that it is a Hadamard state and at most we will obtain a state which is as close as possible to an adiabatic state of order zero.
The construction of adiabatic states can be seen in a paper by Lüders and Roberts~\cite{luders:1990}, making precise previous ideas of Parker~\cite{parker:1969}.
The relation of the adiabatic construction with Hadamard property and the microlocal spectrum condition was later analyzed by Junker and Schrohe~\cite{junker:2002}.

\subsection{Construction}
\label{sub:state_construction}

We shall start our discussion by outlining the construction of the states that we consider in this paper:
We take as states of $\varphi$ on $(M, g) \in \mathcal{O}$ the pure homogeneous quasi-free states $\omega$ whose two-point distribution is given by
\begin{equation}\label{eq:two_point}
  \omega_2(x, y)
  = \lim_{\varepsilon \to 0^+} \frac{1}{(2 \uppi)^3} \int_{\RR^3} \frac{\conj\chi_k(\tau_x)}{a(\tau_x)} \frac{\chi{}_k(\tau_y)}{a(\tau_y)}\, \e^{\im \vec{k} \cdot (\vec{x} - \vec{y})} \e^{-\varepsilon k}\, \dif\vec{k}
\end{equation}
with modes $\chi_k$ satisfying the mode equation
\begin{equation}
  \label{eq:eom_modes}
  \chi_k''(\tau) + \big( k^2 + a(\tau)^2 m^2 \big)\, \chi_k^{\phantom{'}}(\tau) = 0
\end{equation}
and initial conditions given by the zeroth order adiabatic mode, namely\footnote{Notice that the commutator condition $\chi_k' \conj\chi{}_k^{\phantom{'}} - \chi_k^{\phantom{'}} \conj\chi{}_k' = \im$ holds everywhere if it holds on an arbitrary Cauchy surface. Given the chosen initial conditions, it evidently holds for $\tau = \tau_0$.}
\begin{equation}
  \label{eq:initial_conditions_modes}
  \chi_k^{\phantom{'}}(\tau_0) = \frac{1}{\sqrt{2 k_0}}\, \e^{\im k_0 \tau_0},
  \quad
  \chi_k'(\tau_0) = \frac{\im k_0}{\sqrt{2 k_0}}\, \e^{\im k_0 \tau_0},
\end{equation}
where $k_0^2 \defn k^2 + a_0^2 m^2, a_0 \defn a(\tau_0)$.
Note that these states are adiabatic states of order zero \cite{junker:2002,luders:1990} whenever the scale factor $a$ is smooth.

We can solve \eqref{eq:eom_modes} perturbatively around the initial conditions specified above.
That is, introducing the perturbation potential $V(\tau) = m^2 ( a(\tau)^2 - a_0^2 )$, we obtain a solution $\chi_k(\tau) = \sum_n \chi_k^n(\tau)$ by solving the recurrence relation
\begin{equation}
  \label{eq:recurrence_modes}
  \chi_k^n{}''(\tau) + k_0^2\, \chi_k^n(\tau) = - V(\tau)\, \chi_k^{n-1}(\tau)
\end{equation}
for $n > 0$ with initial condition $\chi_k^0(\tau) = (2 k_0)^{-1/2} \exp(\im k_0 \tau)$.
For $\tau > \tau_0$ this is accomplished by applying the retarded propagator of $\partial_\tau^2 + k_0^2$ to \eqref{eq:recurrence_modes}, \ie,
\begin{equation}
  \label{eq:inverted_recurrence}
  \chi_k^n(\tau) = \int_{\tau_0}^\tau \frac{\sin\big( k_0 (\eta - \tau) \big)}{k_0}\, V(\eta)\, \chi_k^{n-1}(\eta)\, \dif\eta.
\end{equation}
The convergence of this Ansatz can be easily shown, \cf~\cite[proposition 4.4]{pinamonti:2011}, because
\begin{equation}
  \label{eq:estimate_chi_n}
  \abs{\chi_k^n} \leq \frac{1}{\sqrt{2 k_0}\, n!}\, \left( \frac{1}{k_0} \int_{\tau_0}^\tau \abs{V(\eta)}\, \dif\eta \right)^l \left( \int_{\tau_0}^\tau (\tau - \eta)\, \abs{V(\eta)}\, \dif\eta \right)^{n-l}
\end{equation}
for any $0 \leq l \leq n$.
We further note that while the modes $\chi_k$ are well-defined for scale factors $a$ which are less regular than $C^2$, in that case the scalar curvature $R$ and the equation of motion~\eqref{eq:eom} are ill-defined, hence the mode equation~\eqref{eq:eom_modes} cannot be consistently derived from \eqref{eq:eom} and the modes do not define a proper state via~\eqref{eq:two_point}.

\subsection{Wick square}
\label{sub:wick_square}

The equation~\eqref{eq:functional-volterra} that we seek to solve contains the Wick square of $\varphi$ in a state $\omega$ and thus (on a smooth spacetime) we need to compute
\begin{equation}\label{eq:wick_square}
  \omega\big( \norder{\varphi^2}(x) \big) = \lim_{y \to x} \big( \omega_2(x, y) - \Hadamard(x, y) \big).
\end{equation}
Instead of performing the minimal subtraction $\omega_2 - \Hadamard$ directly, we follow an approach analogous to that in \cite{pinamonti:2011} to perform an equivalent subtraction on the level of modes.
First of all, a direct calculation shows that for $x, y$ at equal time $\tau$
\begin{multline}\label{eq:adiabatic_subtraction}
  \Hadamard\big( (\tau, \vec{x}), (\tau, \vec{y}) \big) - \lim_{\varepsilon \to 0^+} \frac{1}{(2 \uppi)^3 a(\tau)^2} \int_{\RR^3} \left( \frac{1}{2 k_0} - \frac{V(\tau)}{4 k_0^3} \right) \e^{\im \vec{k} \cdot (\vec{x} - \vec{y})} \e^{-\varepsilon k}\, \dif\vec{k} \\
  = \frac{m^2}{16 \uppi^2} \left( \left(\frac{a_0}{a(\tau)}\right)^2 - 2 \ln\left( \frac{a_0}{a(\tau)} \right) - 2 \ln\left(\frac{\e^{\gamma} m\, \lambda}{\sqrt{2}}\right) \right) + \frac{R(\tau)}{288 \uppi^2} + \order[\big]{\abs{\vec{x} - \vec{y}}^2},
\end{multline}
where $\gamma$ is the Euler--Mascheroni constant and $\lambda$ is the length scale of the Hadamard parametrix (\cf~\cite{moretti:2003}).
Note that the curvature term $R/(288 \uppi^2)$ is exactly canceled by our choice of the renormalization constant $\beta$ in \eqref{eq:renormalization}.
Therefore, subtracting the Hadamard parametrix is (up to the terms indicated above and a conformal rescaling) equivalent to subtracting $1/(2 k_0) - V/(4 k_0^3)$ in Fourier space.
Neglecting the terms on the right-hand side of \eqref{eq:adiabatic_subtraction} for now, this subtraction is still well-defined if the scale factor is not smooth and we can show that it gives indeed a finite result in the coinciding point limit if $\omega_2$ is given by~\eqref{eq:two_point}:

\begin{proposition}\label{prop:regularized_state}
  The regularized two-point distribution
  \begin{equation*}
    \omega_2\big( (\tau, \vec{x}), (\tau, \vec{y}) \big) - \lim_{\varepsilon \to 0^+} \frac{1}{(2 \uppi)^3 a(\tau)^2} \int_{\RR^3} \left( \frac{1}{2 k_0} - \frac{V(\tau)}{4 k_0^3} \right) \e^{\im \vec{k} \cdot (\vec{x} - \vec{y})} \e^{-\varepsilon k}\, \dif\vec{k},
  \end{equation*}
  with $\omega_2$ given by \eqref{eq:two_point}, converges in the coinciding point limit.
\end{proposition}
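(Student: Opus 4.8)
The plan is to perform the subtraction at the level of the modes and reduce the proposition to a single integrability statement. Since $\tau_x = \tau_y = \tau$, the two Fourier integrals in the statement converge absolutely for every $\varepsilon > 0$ and combine into
\[
  \frac{1}{(2\uppi)^3 a(\tau)^2} \int_{\RR^3} g_\tau(\vec k)\, \e^{\im\vec k\cdot(\vec x - \vec y)}\, \e^{-\varepsilon k}\, \dif\vec k,
  \qquad
  g_\tau(\vec k) \defn \abs{\chi_k(\tau)}^2 - \frac{1}{2k_0} + \frac{V(\tau)}{4k_0^3}.
\]
Taking $\varepsilon \to 0^+$, it therefore suffices to show that $g_\tau \in L^1(\RR^3)$ with a bound locally uniform in $\tau$: then dominated convergence removes the regulator $\e^{-\varepsilon k}$, the resulting integral is the Fourier transform of an $L^1$ function and hence continuous in $\vec x - \vec y$, so the coinciding-point limit exists and equals $(2\uppi)^{-3} a(\tau)^{-2}\int_{\RR^3} g_\tau(\vec k)\,\dif\vec k$. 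Since $m > 0$ we have $k_0 \ge a_0 m > 0$ and $\chi_k(\tau)$ depends continuously on $k$, so $g_\tau$ is bounded near $\vec k = 0$ and no infrared problem occurs; the only point is the ultraviolet decay, for which I would establish
\begin{equation*}
  \abs{\chi_k(\tau)}^2 = \frac{1}{2k_0} - \frac{V(\tau)}{4k_0^3} + \order{k_0^{-4}}
  \qquad (k\to\infty),
\end{equation*}
locally uniformly in $\tau$, which gives $\abs{g_\tau(\vec k)} \le C(\tau)(1+k)^{-4}$ and hence $g_\tau \in L^1(\RR^3)$.

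To prove this asymptotic expansion I would use the perturbative series $\chi_k = \sum_{n\ge0}\chi_k^n$ from \eqref{eq:inverted_recurrence} and expand $\abs{\chi_k}^2 = \sum_{n,m\ge0}\chi_k^n\conj{\chi_k^m}$. The estimate \eqref{eq:estimate_chi_n} with $l = n$ gives $\abs{\chi_k^n} = \order{k_0^{-1/2-n}}$ and $\sum_{n\ge3}\abs{\chi_k^n} = \order{k_0^{-7/2}}$, locally uniformly in $\tau$, so $\abs{\chi_k}^2 = \abs{\chi_k^0 + \chi_k^1 + \chi_k^2}^2 + \order{k_0^{-4}}$ and only the terms with $n+m\le2$ can contribute at the order $k_0^{-3}$. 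These are computed explicitly by repeatedly integrating by parts in \eqref{eq:inverted_recurrence}; one uses crucially that $V(\tau_0)=0$, so that the boundary contributions at $\tau_0$ vanish, and that $a\in C^2$ (hence $V\in C^2$), which makes the surviving oscillatory remainders $\order{k_0^{-4}}$. Writing $W(\tau)\defn\int_{\tau_0}^\tau V$, one finds $\abs{\chi_k^0}^2 = (2k_0)^{-1}$, $2\Re(\conj{\chi_k^0}\chi_k^1) = -V(\tau)(4k_0^3)^{-1} + \order{k_0^{-4}}$, $\abs{\chi_k^1}^2 = W(\tau)^2(8k_0^3)^{-1} + \order{k_0^{-4}}$ and $2\Re(\conj{\chi_k^0}\chi_k^2) = -W(\tau)^2(8k_0^3)^{-1} + \order{k_0^{-4}}$, while $2\Re(\conj{\chi_k^1}\chi_k^2)$ and $\abs{\chi_k^2}^2$ are $\order{k_0^{-4}}$. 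Adding up, the two $W^2$ terms cancel and the displayed expansion follows.

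The main obstacle is exactly this cancellation: a priori each of $\abs{\chi_k^1}^2$ and $2\Re(\conj{\chi_k^0}\chi_k^2)$ produces a term of the same order $k_0^{-3}$ as the quantities being subtracted, and one must verify that their sum is $\order{k_0^{-4}}$ — equivalently, that the zeroth-order adiabatic subtraction $(2k_0)^{-1} - V(4k_0^3)^{-1}$, which is just the $1/k_0$-expansion of the WKB amplitude $\big(2\sqrt{k_0^2 + V}\big)^{-1}$ truncated at order $k_0^{-3}$, removes the entire non-summable part of $\abs{\chi_k}^2$. This is consistent with the states \eqref{eq:two_point}--\eqref{eq:initial_conditions_modes} being adiabatic of order zero on smooth spacetimes. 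The attendant bookkeeping of the oscillatory remainders — in particular checking that $C^2$-regularity of $a$, rather than merely $C^1$, is what upgrades the remainder from a bare $o(k_0^{-3})$ (insufficient for integrability in three dimensions) to the summable $\order{k_0^{-4}}$ — is the only technically delicate step.
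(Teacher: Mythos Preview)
Your argument is correct under the paper's standing hypothesis $a\in C^2$, but the route differs from the paper's in two respects. First, for the first-order remainder $\tfrac{1}{4k_0^3}\int_{\tau_0}^\tau\cos\bigl(2k_0(\eta-\tau)\bigr)V'(\eta)\,\dif\eta$ you integrate by parts once more to obtain a pointwise $\order{k_0^{-4}}$ bound and hence $g_\tau\in L^1(\RR^3)$; this costs you one derivative, i.e.\ it uses $V''$ and hence $a\in C^2$. The paper instead stops after a single integration by parts (so only $V'$ enters), uses Fubini at $\varepsilon>0$ to exchange the $k_0$- and $\eta$-integrals, and recognizes the inner $k_0$-integral as an exponential integral with a logarithmic, hence $\eta$-integrable, singularity. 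This yields the coinciding-point limit using only $a\in C^1$, which is precisely the regularity that feeds into Lemma~\ref{lem:bounds} and the later existence theorems (the bound there depends on $\sup a$, $\sup a'$, but not on $a''$). Second, for the order-two terms the paper does not separate $\abs{\chi_k^1}^2$ and $2\Re(\conj\chi{}_k^0\chi_k^2)$; it computes the full sum $\chi_k^0\conj\chi{}_k^2+\chi_k^1\conj\chi{}_k^1+\chi_k^2\conj\chi{}_k^0$ at once, and a trigonometric identity collapses it to a form that is manifestly $\order{k_0^{-4}}$ after one integration by parts, so the $W(\tau)^2/(8k_0^3)$ terms that you identify as ``the main obstacle'' never appear explicitly. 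Your approach buys a cleaner conceptual statement (a genuine $L^1$ bound on $g_\tau$), while the paper's buys the sharper $C^1$ dependence needed downstream.
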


\begin{proof}
  We have to show that
  \begin{equation}\label{eq:regularized_state}
    \lim_{\varepsilon \to 0^+} \int_{\RR^3} \left( \abs{\chi_k(\tau)}^2 - \frac{1}{2 k_0} + \frac{V(\tau)}{4 k_0^3} \right) \e^{-\varepsilon k}\, \dif\vec{k}
    = \int_{\RR^3} \left( \abs{\chi_k(\tau)}^2 - \frac{1}{2 k_0} + \frac{V(\tau)}{4 k_0^3} \right)\, \dif\vec{k}
  \end{equation}
  is finite.
  To this end we expand the product $\abs{\chi_k}^2$ with $\chi_k = \sum_n \chi_k^n$ as
  \begin{equation*}
    \abs{\chi_k}^2 = \sum_{n=0}^\infty \sum_{l=0}^n \chi_k^l\, \conj\chi{}_k^{n-l}
  \end{equation*}
  in terms of the order $n$.
  Inserting this expansion into \eqref{eq:regularized_state}, we can prove the thesis order by order:

  \begin{proof}[0\th Order]
    Since
    \begin{equation*}
      \chi_k^0\, \conj\chi{}_k^0 = \frac{1}{2 k_0},
    \end{equation*}
    the first term $(2 k_0)^{-1}$ in the subtraction exactly cancels the zeroth order term $\abs{\chi_k^0}^2$ in~\eqref{eq:regularized_state}.
    \noqed
  \end{proof}

  \begin{proof}[1\st Order]
    Using an integration by parts, we can rewrite the first order terms as
    \begin{align*}
      (\chi_k^0\, \conj\chi{}_k^1 + \chi_k^1\, \conj\chi{}_k^0)(\tau)
      & = \frac{1}{k_0^2} \int_{\tau_0}^\tau \sin\big( k_0 (\eta - \tau) \big) \cos\big( k_0 (\eta - \tau) \big) V(\eta)\, \dif\eta \\
      & = \frac{1}{2 k_0^2} \int_{\tau_0}^\tau \sin\big( 2 k_0 (\eta - \tau) \big) V(\eta)\, \dif\eta \\
      & = - \frac{V(\tau)}{4 k_0^3} + \frac{1}{4 k_0^3} \int_{\tau_0}^\tau \cos\big( 2 k_0 (\eta - \tau) \big) V'(\eta)\, \dif\eta.
    \end{align*}
    While the first summand $V(\tau) (4 k_0^3)^{-1}$ in the last line is exactly canceled by the second term in the subtraction in \eqref{eq:regularized_state}, the second summand yields
    \begin{align}
      \notag
      \MoveEqLeft \int_{\RR^3} \frac{1}{4 k_0^3} \left( \int_{\tau_0}^\tau \cos\big( 2 k_0 (\eta - \tau) \big) V'(\eta)\, \dif\eta \right) \e^{-\varepsilon k}\, \dif\kk \\
      \notag
      & = \uppi \int_0^\infty \frac{k^2}{k_0^3} \left( \int_{\tau_0}^\tau \cos\big( 2 k_0 (\eta - \tau) \big) V'(\eta)\, \dif\eta \right) \e^{-\varepsilon k}\, \dif k \\
      \notag
      & = \uppi \int_{a_0 m}^\infty k_0^{-1} \sqrt{1 - a^2 k_0^{-2}}\, \left( \int_{\tau_0}^\tau \cos\big( 2 k_0 (\eta - \tau) \big) V'(\eta)\, \dif\eta \right) \e^{-\varepsilon k}\, \dif k_0 \\
      \label{eq:1st_order_k_integral}
      & = \uppi \int_{\tau_0}^\tau V'(\eta)\, \left( \int_{a_0 m}^\infty k_0^{-1} \cos\big( 2 k_0 (\eta - \tau) \big)\, \e^{-\varepsilon k}\, \dif k_0 \right)\, \dif\eta - R(\tau)
    \end{align}
    for $\varepsilon > 0$.
    Here $R$ is a finite remainder term since it contains terms in the $k_0$-integration which decay at least like $k_0^{-3}$.
    Notice that, in the last equation of the previous formula, thanks to the positivity of $\varepsilon$ we have switched the order in which the $k_0$- and $\eta$-integration are taken.
    We would like to show that, in the last expression of \eqref{eq:1st_order_k_integral}, the weak limit $\varepsilon\to 0^+$ can be taken before the $\eta$-integration.

    To this end it remains to be shown that the $k_0$-integral in the first summand of \eqref{eq:1st_order_k_integral} converges in the limit $\varepsilon \to 0^+$ to an integrable function in $[\tau_0, \tau]$.
    First, note that the exponential integral
    \begin{equation}\label{eq:incomplete_gamma}
      E_1(x) = \Gamma(0, x)
      = \int_1^\infty \frac{\e^{-x t}}{t}\, \dif t
      = \int_0^1 \frac{\e^{-x}}{x - \ln (1 - s)}\, \dif s
    \end{equation}
    converges for $x \neq 0, \Re x \geq 0$.
    To show the identity, we used the substitution $t = -x^{-1} \ln (1 - s) + 1$ involving a subtle but inconsequential change of the integration contour in the complex plane if $x$ is complex.
    Then we easily see that
    \begin{equation}\label{eq:gamma_k_integral}
      \lim_{\varepsilon \to 0^+} \int_{a_0 m}^\infty k_0^{-1} \e^{\pm 2 \im k_0 (\eta - \tau) - \varepsilon k_0}\, \dif k_0 = E_1\big( \!\pm 2 \im a_0 m (\eta - \tau) \big)
    \end{equation}
    converges for $\eta \neq \tau$.
    This result is related with \eqref{eq:1st_order_k_integral} via
    \begin{equation*}
      \lim_{\varepsilon \to 0^+} \int_{a_0 m}^\infty k_0^{-1} \cos\big( 2 k_0 (\eta - \tau) \big)\, \e^{-\varepsilon k}\, \dif k_0
      = \lim_{\varepsilon \to 0^+} \int_{a_0 m}^\infty k_0^{-1} \cos\big( 2 k_0 (\eta - \tau) \big)\, \e^{-\varepsilon k_0}\, \dif k_0,
    \end{equation*}
    where we used the boundedness of $k - (k^2 - a_0^2\, m^2)^{1/2}$ and \eqref{eq:gamma_k_integral}.
    Finally, a bound sufficient to see the $\eta$-integrability of the $k_0$-integral in \eqref{eq:1st_order_k_integral} can be obtained from the identity in \eqref{eq:incomplete_gamma}, namely,
    \begin{equation*}
      \abs*{E_1(\im x)}
      = \abs*{\int_0^1 \big( \im x - \ln (1 - s) \big)^{-1}\, \dif s}
      \leq \int_0^1 \big( x^2 + s^2 \big)^{-1/2}\, \dif s
      = \ln\bigg( \frac{1 + \sqrt{1 + x^2}}{x} \bigg).
    \end{equation*}
    \noqed
  \end{proof}

  \begin{proof}[2\nd Order]
    For the second order we calculate
    \begin{align}
      \notag
      \MoveEqLeft (\chi_k^0\, \conj\chi{}_k^2 + \chi_k^1\, \conj\chi{}_k^1 + \chi_k^2\, \conj\chi{}_k^0)(\tau) \\
      \notag
      & = \!\begin{multlined}[t]
        \frac{1}{k_0^3} \int_{\tau_0}^\tau \sin\big( k_0 (\eta - \tau) \big) V(\eta)\, \left( \int_{\tau_0}^\eta \sin\big( k_0 (\xi - \eta) \big) V(\xi) \cos\big( k_0 (\xi - \tau) \big)\, \dif\xi \right. \\
        + \left. \frac{1}{2} \int_{\tau_0}^\tau \sin\big( k_0 (\xi - \tau) \big) V(\xi) \cos\big( k_0 (\xi - \eta) \big)\, \dif\xi \right)\, \dif\eta
      \end{multlined} \\
      \notag
      & = \frac{1}{k_0^3} \int_{\tau_0}^\tau \sin\big( k_0 (\eta - \tau) \big) V(\eta)\, \left( \int_{\tau_0}^\eta \sin\big( k_0 (2 \xi - \eta - \tau) \big) V(\xi)\, \dif\xi \right)\, \dif\eta \\
      \label{eq:2nd_order_final} % FIXME (tag placement)
      & = \!\begin{multlined}[t]
        \frac{1}{2 k_0^4} \int_{\tau_0}^\tau \sin\big( k_0 (\eta - \tau) \big) V(\eta)\, \left( \int_{\tau_0}^\eta \cos\big( k_0 (2 \xi - \eta - \tau) \big) V'(\xi)\, \dif\xi \right. \\
        \left.\vphantom{\int_{\tau_0}^\eta} - \cos\big( k_0 (\eta - \tau) \big) V(\eta) \right)\, \dif\eta,
      \end{multlined}
    \end{align}
    where we used integration by parts in the last equality.
    It is easy to obtain a $\kk$-uniform estimate for the integral above and thus the integrability of the second order follows from $\int_{\RR^3} k_0^{-4}\, \dif\kk < \infty$.
    \noqed
  \end{proof}

  \begin{proof}[Higher Orders]
    For orders $n > 2$ it is sufficient to use the rough estimate from~\eqref{eq:estimate_chi_n}:
    \begin{align}
      \notag
      \abs*{\sum_{n=3}^\infty \sum_{l=0}^n \chi_k^l\, \conj\chi{}_k^{n-l}}(\tau)
      & \leq \frac{1}{2 k_0} \sum_{n=3}^\infty \frac{2^n}{n!} \left( \frac{1}{k_0} \int_{\tau_0}^\tau \abs[\big]{V(\eta)}\, \dif\eta \right)^3 \left( \int_{\tau_0}^\tau (\tau - \eta)\, \abs[\big]{V(\eta)}\, \dif\eta \right)^{n-3} \\
      \label{eq:higher_order_final}
      & \leq \frac{4}{k_0^4} \left( \int_{\tau_0}^\tau \abs[\big]{V(\eta)}\, \dif\eta \right)^3 \exp\left( 2 \int_{\tau_0}^\tau (\tau - \eta)\, \abs[\big]{V(\eta)}\, \dif\eta \right).
    \end{align}
    As above, the integrability of the higher orders follows from $\int_{\RR^3} k_0^{-4}\, \dif\kk < \infty$.
    \noqed
  \end{proof}

  \noindent Combining these partial results, we see that the thesis holds true.
\end{proof}

Therefore we can consistently define the renormalized Wick square at conformal time $\tau$ for every element of $\mathcal{O}$ as
\begin{multline}\label{eq:renormalized_wick_square}
  \omega\big(\norder{\varphi^2}(\tau)\big) + \alpha\, m^2 + \beta\, R(\tau) \defn \frac{1}{(2 \uppi)^3 a(\tau)^2} \int_{\RR^3} \left( \abs{\chi_k(\tau)}^2 - \frac{1}{2 k_0} + \frac{V(\tau)}{4 k_0^3} \right)\, \dif\vec{k} \\ + \frac{m^2}{(4 \uppi)^2} \left( \frac{1}{2} - \left(\frac{a_0}{a(\tau)}\right)^2 + 2 \ln\left( \frac{a_0}{a(\tau)} \right) + 2 \ln\left(\frac{\e^{\gamma} m\, \lambda}{\sqrt{2}}\right) \right) ,
\end{multline}
which coincides with \eqref{eq:wick_square} for smooth spacetimes (up to the added renormalization freedom).
Moreover, we notice that, as a consequence of the previous proposition, it is possible to obtain global estimates for the renormalized Wick square:

\begin{lemma}\label{lem:bounds}
  The renormalized Wick square is bounded on every $a' \in C[\tau_0,\tau_1]$ with $a > 0$ in $[\tau_0,\tau_1]$ for every $\tau_1$ and with $a(\tau_0) = a_0$, namely,
  \begin{equation*}
    \abs*{\omega\big( \norder{\varphi^2}(\tau_1) \big) + \alpha\, m^2 + \beta\, R(\tau_1)} \leq C\left(\sup_{[\tau_0,\tau_1]}{a},\sup_{[\tau_0,\tau_1]}{a'}, (\tau_1-\tau_0),\frac{1}{\inf_{[\tau_0,\tau_1]}a } \right) \;
  \end{equation*}
  where $C$ is a finite increasing function.
\end{lemma}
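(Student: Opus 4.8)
The plan is to estimate the two lines of the defining formula \eqref{eq:renormalized_wick_square} at $\tau=\tau_1$ separately. The second line depends on $a$ only through $(a_0/a(\tau_1))^{\pm1}$ and $\ln(a_0/a(\tau_1))$; since $\inf_{[\tau_0,\tau_1]}a\le a(\tau_1)\le\sup_{[\tau_0,\tau_1]}a$ while $a_0$, $m$, $\lambda$ are fixed, it is at once dominated by an increasing function of $\sup a$ and $1/\inf a$. The first line equals $\bigl((2\uppi)^3 a(\tau_1)^2\bigr)^{-1}$ times the convergent integral $\int_{\RR^3}\bigl(\abs{\chi_k(\tau_1)}^2-\tfrac{1}{2k_0}+\tfrac{V(\tau_1)}{4k_0^3}\bigr)\dif\kk$, and as $a(\tau_1)^{-2}\le(\inf_{[\tau_0,\tau_1]}a)^{-2}$ it suffices to bound that integral. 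First I would reuse the expansion $\abs{\chi_k}^2=\sum_{n\ge0}\sum_{l=0}^n\chi_k^l\conj\chi{}_k^{n-l}$ together with the order-by-order identities and pointwise-in-$\kk$ estimates already obtained in the proof of Proposition~\ref{prop:regularized_state}. Throughout I would use that on $[\tau_0,\tau_1]$ one has $\abs{V}\le m^2\bigl((\sup a)^2+a_0^2\bigr)$ and $\abs{V'}=\abs{2m^2 a a'}\le 2m^2\sup a\sup a'$, and that the hypothesis $a'\in C[\tau_0,\tau_1]$ exactly guarantees $V\in C^1[\tau_0,\tau_1]$ (so the integrations by parts used in Proposition~\ref{prop:regularized_state} are legitimate) and $\sup\abs{V'}<\infty$.

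The zeroth order cancels identically. For the first order, the manipulations leading to \eqref{eq:1st_order_k_integral} write the regularised term as $\uppi\int_{\tau_0}^{\tau_1}V'(\eta)\bigl(\int_{a_0 m}^{\infty}k_0^{-1}\cos(2k_0(\eta-\tau_1))\,\dif k_0\bigr)\dif\eta$ minus a remainder whose inner $k_0$-integrand is $\order{k_0^{-3}}$, hence absolutely convergent with a bound independent of $\eta$, so that the remainder is at most a fixed multiple of $(\tau_1-\tau_0)\sup\abs{V'}$. For the main summand, \eqref{eq:incomplete_gamma}--\eqref{eq:gamma_k_integral} show that (after the harmless replacement of $\e^{-\varepsilon k}$ by $\e^{-\varepsilon k_0}$, legitimate because $k_0-k$ is bounded) the inner integral tends, as $\varepsilon\to0^+$, to a function of $\eta$ whose modulus does not exceed $\ln\bigl(\tfrac{1+\sqrt{1+x^2}}{x}\bigr)$ with $x=2a_0 m\abs{\tau_1-\eta}$; since this bound has only a logarithmic singularity at $\eta=\tau_1$, its integral over $\eta\in[\tau_0,\tau_1]$ is finite and increasing in $\tau_1-\tau_0$ (and in the fixed quantity $a_0 m$). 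Hence the first-order contribution is controlled by an increasing function of $\sup a$, $\sup a'$ and $\tau_1-\tau_0$.

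For the second order I would feed the crude bounds $\abs{\sin},\abs{\cos}\le1$ into \eqref{eq:2nd_order_final}, which yields the $\kk$-uniform estimate $\tfrac{1}{2k_0^4}(\tau_1-\tau_0)\sup\abs{V}\bigl((\tau_1-\tau_0)\sup\abs{V'}+\sup\abs{V}\bigr)$, and for the orders $n\ge3$ the estimate \eqref{eq:higher_order_final} already gives $4k_0^{-4}\bigl((\tau_1-\tau_0)\sup\abs{V}\bigr)^3\exp\bigl((\tau_1-\tau_0)^2\sup\abs{V}\bigr)$. Because $\int_{\RR^3}k_0^{-4}\,\dif\kk=\uppi^2/(a_0 m)<\infty$, the sum over all $n\ge2$ is again bounded by an increasing function of $\sup a$, $\sup a'$ and $\tau_1-\tau_0$. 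Collecting the three groups, multiplying by the prefactor, which is $\le(2\uppi)^{-3}(\inf_{[\tau_0,\tau_1]}a)^{-2}$, and adding the bound already found for the second line of \eqref{eq:renormalized_wick_square}, one reads off the asserted inequality, with $C$ the resulting finite increasing function of $\sup_{[\tau_0,\tau_1]}a$, $\sup_{[\tau_0,\tau_1]}a'$, $\tau_1-\tau_0$ and $1/\inf_{[\tau_0,\tau_1]}a$ (and of the fixed parameters $m$, $\lambda$, $a_0$).

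The step I expect to be the genuine obstacle is the first order. Once the local counterterm $-V/(4k_0^3)$ has been removed, what is left decays only like $k_0^{-3}$ and is therefore not absolutely integrable over $\RR^3$, so one cannot simply take absolute values under the $\kk$-integral as in the higher orders. One is forced to retain the oscillatory $k_0$-integral, identify it with a regularised cosine (equivalently exponential) integral, and make the convergence argument of Proposition~\ref{prop:regularized_state} quantitative — namely, that the only singularity of this integral, occurring as $\eta\to\tau_1$, is logarithmic and hence integrable in $\eta$ with a constant depending on $\tau_1-\tau_0$ alone. Everything beyond this point is monotone bookkeeping in the quantities $\sup a$, $\sup a'$, $\tau_1-\tau_0$ and $1/\inf a$.
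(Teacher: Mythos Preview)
Your proposal is correct and follows precisely the approach the paper itself indicates: the paper's proof of this lemma is a one-sentence sketch saying to combine \eqref{eq:wick_square} with \eqref{eq:adiabatic_subtraction} and then analyze the adiabatic subtraction \eqref{eq:regularized_state} order by order as in Proposition~\ref{prop:regularized_state}. You have carried out exactly that programme, supplying the explicit monotone bounds in $\sup a$, $\sup a'$, $\tau_1-\tau_0$, $1/\inf a$ that the paper leaves implicit, and your identification of the first-order term as the only non-absolutely-integrable piece (requiring the quantitative $E_1$ bound) is accurate.
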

\begin{proof}
  The proof of this lemma and the explicit value of $C$, can be obtained combining \eqref{eq:wick_square} with \eqref{eq:adiabatic_subtraction} and then analyzing the conformal adiabatic subtraction \eqref{eq:regularized_state} order by order as in the proof of the preceding proposition.
\end{proof}

\subsection{Energy density}

There is another nice feature about the states we have constructed above.
Thanks to the conformal coupling with the curvature, the energy density computed in these states is finite even though these states are (on smooth spacetimes) only adiabatic states of order zero.
This is a crucial feature which permits us to solve the constraint~\eqref{eq:constraint}, \ie, $H_0^2 = 1/3\, (\rho(\tau_0) + \Lambda)$, as a first step towards solving the semiclassical Einstein equation.

\begin{proposition}
  The energy density $\rho$ in the state $\omega$ at the initial time $\tau = \tau_0$ is finite.
\end{proposition}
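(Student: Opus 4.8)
The plan is to reduce $\rho(\tau_0) = -\E{{T_0}^{0}}_\omega(\tau_0)$ to an explicit momentum-space integral and to prove its convergence by the same order-by-order method as in Proposition~\ref{prop:regularized_state}. The point-splitting prescription expresses $\rho$ as a second-order differential operator built from the conformally coupled stress tensor (see~\eqref{eq:trace_T} and \cite{dappiaggi:2008,moretti:2003}) applied to $\omega_2(x,y) - \Hadamard(x,y)$, evaluated in the coinciding-point limit, plus the finite renormalization-freedom and trace-anomaly terms $T_{\mathrm{RF}}$ and $2[V_1]$. After the conformal rescaling $g = a^2\eta$ and insertion of \eqref{eq:two_point}, this becomes, at equal time $\tau=\tau_0$, an integral over $\RR^3$ of elementary expressions in $\abs{\chi_k(\tau_0)}^2$, $\abs{\chi_k'(\tau_0)}^2$ and $\chi_k(\tau_0)\,\conj\chi{}_k'(\tau_0)$, weighted by curvature prefactors depending on $a_0$, $H_0$ and $\dot H_0$, together with the momentum-space Hadamard subtraction, which is the derivative analogue of the subtraction in \eqref{eq:adiabatic_subtraction} and \eqref{eq:regularized_state}.

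The decisive simplification is that at $\tau=\tau_0$ all perturbative corrections and their first derivatives vanish. From \eqref{eq:inverted_recurrence} one reads $\chi_k^n(\tau_0)=0$ for $n\geq 1$; differentiating the integral produces only a boundary term carrying the factor $\sin(k_0(\eta-\tau))$, which vanishes at $\eta=\tau$, so also $\chi_k^n{}'(\tau_0)=0$. Hence $\chi_k(\tau_0)=\chi_k^0(\tau_0)=(2k_0)^{-1/2}\e^{\im k_0\tau_0}$, $\chi_k'(\tau_0)=\im k_0(2k_0)^{-1/2}\e^{\im k_0\tau_0}$, and $V(\tau_0)=m^2(a_0^2-a_0^2)=0$. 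Therefore $\abs{\chi_k(\tau_0)}^2=1/(2k_0)$, $\abs{\chi_k'(\tau_0)}^2=k_0/2$, and the combination governing the energy density collapses, $\abs{\chi_k'(\tau_0)}^2+(k^2+a_0^2m^2)\abs{\chi_k(\tau_0)}^2=k_0$; the integrand is exactly the instantaneous adiabatic-vacuum energy density of the conformally rescaled field.

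It then remains to verify that the Hadamard subtraction removes precisely the non-integrable part of this explicit integrand. The leading term $k_0$ is cancelled by the zeroth-order adiabatic term, just as $1/(2k_0)$ was cancelled in the $0$th order step of the proof of Proposition~\ref{prop:regularized_state}; the potentially dangerous subleading contributions of the parametrix --- those whose $\kk$-integral would fail to converge for a generic curvature coupling --- are cancelled here precisely because the coupling is conformal (the same mechanism that made the curvature term $R/(288\uppi^2)$ drop out below \eqref{eq:adiabatic_subtraction}), reinforced by $V(\tau_0)=0$. As with the Wick square in \eqref{eq:renormalized_wick_square}, one finds that after subtraction the $\kk$-integral is absolutely convergent, and adding the finite remainder from the right-hand side of \eqref{eq:adiabatic_subtraction} and its derivative counterpart, the trace anomaly $2[V_1]$, and the renormalization constants \eqref{eq:renormalization} --- all manifestly finite since $a\in C^2$ near $\tau_0$ --- yields $\abs{\rho(\tau_0)}<\infty$.

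The main obstacle is this last step: setting up the equal-time Hadamard parametrix for the energy density with all of its derivative and curvature terms and checking the cancellations order by order. This is a somewhat heavier version of the computation in Proposition~\ref{prop:regularized_state} and Lemma~\ref{lem:bounds}, and it is the only place where the conformal coupling is genuinely needed; once it is carried out, the finiteness of $\rho(\tau_0)$ --- and hence the solvability of the constraint \eqref{eq:constraint} --- follows.
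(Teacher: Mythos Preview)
Your key observations are correct: at $\tau=\tau_0$ all $\chi_k^n$ with $n\ge1$ and their first derivatives vanish, so the mode contribution to the conformal energy density collapses exactly to $\abs{\chi_k'}^2+(k^2+a_0^2m^2)\abs{\chi_k}^2=k_0$. This is precisely what drives the paper's argument as well. The difference is in how the subtraction is organized.

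The paper bypasses the full Hadamard computation you describe as ``the main obstacle''. Instead of applying the stress-energy operator to $\Hadamard(x,y)$ and chasing the cancellations, it invokes (via~\cite{hack:2013}) the equivalence between Hadamard regularization and subtraction of Parker's adiabatic modes of order zero, $\chi_{k,0}=(2(k^2+m^2a^2))^{-1/2}\exp\big(\im\int_{\tau_0}^\tau\sqrt{k^2+m^2a^2}\big)$. Because these subtraction modes have the same functional form as the actual modes, the subtracted integrand can be written down immediately: at $\tau_0$ one has $\abs{\chi_{k,0}}^2=1/(2k_0)$ and $\abs{\chi_{k,0}'}^2=k_0/2+m^4a_0^2(a')^2/(8k_0^5)$, so the difference is just $-m^4a_0^2(a')^2/(8k_0^5)$, manifestly $k^2\,\dif k$-integrable. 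The entire proof is a two-line evaluation.

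Your route through the coordinate-space Hadamard parametrix would work, but it is the harder road: it requires the derivative analogue of \eqref{eq:adiabatic_subtraction} with all curvature terms, and you correctly flag that this is where the real labour lies and that conformal coupling is essential. The paper's choice to subtract in mode space with $\chi_{k,0}$ exploits exactly the same conformal-coupling simplification (the adiabatic modes of order zero are exact solutions in the massless conformal case, so the massive correction enters only at order $k_0^{-5}$) but makes the cancellation automatic rather than something to be verified term by term.
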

\begin{proof}
  Following \cite{hack:2013}, in order to show that $\rho(\tau_0)$ is finite, we just need to show that the adiabatically regularized expression
  \begin{equation}\label{eq:adiabatic-conformal-energy}
    \int_0^\infty \Big(
      \big( \abs{\chi_k'}^2 + (k^2 + m^2 a^2) \abs{\chi_k}^2 \big)
      -
      \big( \abs{\chi_{k,0}'}^2 + (k^2 + m^2 a^2) \abs{\chi_{k,0}}^2 \big)
    \Big)\, k^2\, \dif k
  \end{equation}
  does not diverge at $\tau = \tau_0$.
  Here $\chi_k$ are the modes constructed in section~\ref{sub:state_construction}, whereas the functions
  \begin{equation*}
    \chi_{k,0} \defn \frac{1}{\sqrt{2} (k^2 + m^2 a^2)^{1/4}} \exp\left( \im \int_{\tau_0}^\tau \sqrt{k^2 + m^2 a^2(\eta)}\, \dif\eta \right)
  \end{equation*}
  are called adiabatic modes of order zero by Parker~\cite{parker:1969}.
  Subtracting Parker's adiabatic modes in \eqref{eq:adiabatic-conformal-energy} is essentially equivalent to the Hadamard regularization for the given choice of renormalization constants \eqref{eq:renormalization}.
  Evaluating the expression~\eqref{eq:adiabatic-conformal-energy} at $\tau = \tau_0$ gives
  \begin{equation*}
    \frac{m^4}{8} \int_0^\infty \frac{a_0^2\, (a')^2}{ (k^2 + m^2 a_0^2)^{5/2}}\, k^2\, \dif k < \infty.
    \qedhere
  \end{equation*}
\end{proof}
Notice that the previous proposition only guarantees that the energy density is well-defined at the initial time.
Nevertheless, the conservation equation for the stress-energy tensor permits to state that it is well-defined everywhere.

The expression~\eqref{eq:adiabatic-conformal-energy} coincides with the energy density $\rho$ of the system up to a conformal rescaling and up to the addition of some finite terms.
Thus, since the energy density $\rho$ is finite in the considered state, the constraint~\eqref{eq:constraint} holds, provided a suitable choice of $H(\tau_0)$ and $\Lambda$ is made.
We stress that, if we do not want to alter $\Lambda$, the same result can be achieved adding classical radiation to the energy density of the universe in a suitable state.

We would like to conclude this section with a remark.
In adiabatic states of order zero the expectation values of local fields containing derivatives are usually ill-defined.
Despite this, in the case of conformal coupling and for our choice of initial conditions~\eqref{eq:initial_conditions_modes}, the energy density turns out to be well-defined.
This is essentially due to the fact that in the massless conformally coupled case the adiabatic modes of order zero are solutions of the mode equation~\eqref{eq:eom_modes} and in that case the obtained state is the well known conformal vacuum. Hence, the adiabatically regularized energy density vanishes.
In the massive case the states constructed above are not very different than the conformal vacuum and, in particular, the energy density remains finite under that perturbation.

%----------------------------------------------------------------------------%

\section{Local solutions}
\label{sec:local}

Our aim is to show the existence and uniqueness of local solutions to the semiclassical Einstein equation in the class $\mathcal{O}$ of FLRW spacetimes.
In particular, according to the discussion in the introduction, we will analyze the uniqueness and existence of solutions of \eqref{eq:functional-volterra}.
Similar to the Picard--Lindelöf theorem, we will use the Banach fixed-point theorem to achieve this goal.
Some results on functional derivatives and the Banach fixed-point theorem are collected in the \hyperref[sec:appendix]{appendix}.

First we have to select a Banach space for candidate Hubble functions $H$ that the functional $F$ (see eqs. \eqref{eq:functional-volterra} and \eqref{eq:functional-volterra-short}) operates upon:
$F$ can be considered as a functional on the Banach spaces\footnote{Until fixed, we take both $\tau_0$ and $\tau_1$ as variable and thus consider a family of Banach spaces.} $C[\tau_0, \tau_1]$, $\tau_0 < \tau_1 $, equipped with the uniform norm
\begin{equation*}
  \norm{X}_{C[\tau_0,\tau_1]} \defn \norm{X}_\infty \defn \sup_{\tau \in [\tau_0,\tau_1]} \abs{X(\tau)}.
\end{equation*}
However, once $\tau_0$ and the initial condition $a_0 = a(H)(\tau_0) > 0$ are fixed, we find that
\begin{equation}\label{eq:a_of_H}
  a(H)(\tau) = a_0 \left(1 - a_0 \int_{\tau_0}^\tau H(\eta)\, \dif\eta \right)^{-1},
\end{equation}
as a functional of $H$, is not continuous on $C[\tau_0, \tau_1]$.
But we can find an open subset
\begin{equation}\label{eq:set_U}
  \mathcal{U}[\tau_0,\tau_1] \defn \Big\{ H \in C[\tau_0, \tau_1] \;\Big|\; \norm{H}_{C[\tau_0,\tau_1]} < \min\big\{ \big(a(H)(\tau_0) (\tau_1 - \tau_0)\big)^{-1}, H_c \big\} \Big\}
\end{equation}
on which $a$ and thus also $V = m^2 (a^2 - a_0^2)$ depend smoothly on $H$.
Indeed, we can show the following:

\begin{proposition}\label{prop:gateaux_diffable}
  The functional
  \begin{equation}\label{eq:functional_f}
    f(H) = \frac{a(H)}{H^2_c - H^2} \Big( H^4 - 2 H_c^2 H^2 + 240 \uppi^2 \big(m^2 \omega(\norder{\varphi^2})(H) + \beta\, m^2 R(H) + 4 \tilde\Lambda \big) \Big)
  \end{equation}
  is continuously Gâteaux differentiable on $\mathcal{U}[\tau_0,\tau_1]$ for arbitrary but fixed $\tau_0, \tau_1$ and $a_0 = a(\tau_0)$.
\end{proposition}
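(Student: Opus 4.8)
The plan is to decompose $f(H)$ into its constituent pieces and argue that each depends continuously Gâteaux differentiably on $H$ over the open set $\mathcal{U}[\tau_0,\tau_1]$, then invoke the fact that sums, products, quotients (with non-vanishing denominator) and compositions of continuously Gâteaux differentiable maps are again continuously Gâteaux differentiable. First I would treat the elementary building blocks. The map $H \mapsto \int_{\tau_0}^\cdot H(\eta)\,\dif\eta$ is bounded linear from $C[\tau_0,\tau_1]$ to itself, hence smooth; by the explicit formula \eqref{eq:a_of_H} and the defining constraint $\norm{H}_\infty < (a_0(\tau_1-\tau_0))^{-1}$ in \eqref{eq:set_U}, the quantity $1 - a_0\int_{\tau_0}^\tau H$ is bounded away from zero uniformly on $\mathcal{U}[\tau_0,\tau_1]$, so $a(H)$ is a smooth (in particular continuously Gâteaux differentiable) $C[\tau_0,\tau_1]$-valued functional, and likewise $V(H) = m^2(a(H)^2 - a_0^2)$ and the scalar curvature $R(H)$, which in conformal coordinates is a polynomial expression in $a$, $a'$, $a''$ — equivalently, after using the Friedmann relations, a smooth expression in $H$ and its first $\tau$-derivative; here one must be a little careful, but since $R$ enters \eqref{eq:friedmann-trace} only through $\dot H + 2H^2$ up to the terms already absorbed by the renormalization constants, the relevant dependence is through $H$ pointwise (plus possibly $H'$, which is handled by noting $f$ is what gets integrated, not differentiated). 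The purely algebraic prefactor $H^4 - 2H_c^2 H^2$ and the rational factor $(H_c^2 - H^2)^{-1}$ are smooth on $\mathcal{U}[\tau_0,\tau_1]$ precisely because $\norm{H}_\infty < H_c$ there, keeping the denominator bounded away from zero.

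The substantive step is the Wick square term $H \mapsto \omega(\norder{\varphi^2})(H)$, and this is where the appeal to the appendix is essential. The strategy is to write, using \eqref{eq:renormalized_wick_square}, the renormalized Wick square as a $k$-integral of $\abs{\chi_k(\tau)}^2 - (2k_0)^{-1} + V(\tau)(4k_0^3)^{-1}$ plus explicit smooth functions of $a(H)$; the modes $\chi_k = \sum_n \chi_k^n$ are, by the Volterra recursion \eqref{eq:inverted_recurrence}, built from $V(H)$ by repeated application of a fixed bounded integral operator, so each $\chi_k^n$ is a continuously Gâteaux differentiable functional of $H$ with derivative obtained by the product rule on \eqref{eq:inverted_recurrence}. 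The order-by-order bounds \eqref{eq:estimate_chi_n} (and their differentiated analogues, which have the same structure since differentiating \eqref{eq:inverted_recurrence} in the direction $\delta H$ inserts one factor $\delta V$ and otherwise preserves the combinatorial form) give $k_0^{-4}$-type decay uniform in $H$ on bounded subsets of $\mathcal{U}[\tau_0,\tau_1]$, which is exactly what is needed to differentiate under the $k$-integral sign and to get continuity of the resulting Gâteaux derivative in $H$. This mirrors, term by term, the convergence argument already carried out in the proof of Proposition~\ref{prop:regularized_state}; in effect one runs that proof with an extra directional derivative attached. I would package the "differentiation under the integral plus continuity of the derivative" step as a lemma from the appendix applied to the explicit family of integrands.

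The main obstacle I anticipate is the low-order terms — the $0$\th, $1$\st, and $2$\nd order contributions to $\abs{\chi_k}^2$ — whose $k_0$-integrals are only conditionally convergent (recall the first-order term produced the exponential-integral $E_1$ and required the subtle $\varepsilon \to 0^+$ interchange in \eqref{eq:1st_order_k_integral}). For the Gâteaux derivative one must check that this delicate interchange of limits survives differentiation in $H$: differentiating the first-order expression in the direction $\delta H$ replaces $V'$ by $(\delta V)'$ (or generates a $V'$ times a $\delta a$-factor), and one needs the bound $\abs{E_1(\im x)} \le \ln\bigl((1+\sqrt{1+x^2})/x\bigr)$ together with $\eta$-integrability to go through uniformly as $H$ varies in a neighborhood — which it does, since the bound depends on $H$ only through $\sup a$, $\sup a'$ and $\tau_1 - \tau_0$, exactly as in Lemma~\ref{lem:bounds}. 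Once the low orders are dispatched this way, the tail $n \ge 3$ is controlled crudely by the differentiated version of \eqref{eq:higher_order_final}, uniformly on bounded subsets of $\mathcal{U}[\tau_0,\tau_1]$, completing the verification that $\omega(\norder{\varphi^2})(H)$ is continuously Gâteaux differentiable; assembling the pieces by the algebra of such maps then yields the claim for $f$.
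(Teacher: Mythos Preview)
Your proposal is correct and follows essentially the same route as the paper's own proof: reduce to the boundedness of the elementary factors $a(H)$ and $(H_c^2-H^2)^{-1}$ on $\mathcal{U}[\tau_0,\tau_1]$ plus continuous Gâteaux differentiability of the Wick-square expression \eqref{eq:renormalized_wick_square}, then rerun the order-by-order analysis of Proposition~\ref{prop:regularized_state} with $V$, $V'$ replaced by their directional derivatives $\gdif V$, $\gdif V'$ and control the tail $n\geq 3$ by the differentiated analogue of \eqref{eq:higher_order_final}. One clean-up: your separate treatment of $R(H)$ is both unnecessary and its parenthetical justification (``$f$ is what gets integrated, not differentiated'') wrong --- $R$ involves $H'$, which need not exist for $H\in C[\tau_0,\tau_1]$ --- but this is moot because the combination $\omega(\norder{\varphi^2})+\beta R$ appearing in $f$ is \emph{defined} by the right-hand side of \eqref{eq:renormalized_wick_square}, which depends only on $a$ and $V$ and never on $R$ itself.
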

\begin{proof}
  Given \eqref{eq:wick_square}, \eqref{eq:adiabatic_subtraction}, proposition~\ref{prop:regularized_state} and lemma~\ref{lem:bounds}, it is enough to show that $a(H)$ and $(H_c^2 - H^2)^{-1}$ are bounded and that $\omega(\norder{\varphi^2})(\tau_0, H)$ is continuously Gâteaux differentiable.
  The former is assured by the condition $\norm{H}_{C[\tau_0,\tau_1]} < \min\{ a_0^{-1} (\tau_1 - \tau_0)^{-1}, H_c \}$ in the definition of $\mathcal{U}[\tau_0,\tau_1]$.
  For the latter it remains to be shown that the renormalized Wick square \eqref{eq:renormalized_wick_square} is continuously Gâteaux differentiable on $\mathcal{U}[\tau_0,\tau_1]$:

  We start by calculating the functional derivative of the scale factor
  \begin{equation*}
    \gdif a(H, \delta H)(\tau) = a(H)(\tau)^2 \int_{\tau_0}^\tau \delta H(\eta)\, \dif\eta.
  \end{equation*}
  The functional derivatives for $a^{-2}$, $\ln a$, $V$ and $V'$ follow easily.
  In particular we note that all these functions are continuously Gâteaux differentiable on $\mathcal{U}[\tau_0,\tau_1]$ because integration is a continuous operation and $a$ depends smoothly on $H$ in $\mathcal{U}[\tau_0,\tau_1]$.
  Therefore it suffices to analyze the differentiablity of the integral \eqref{eq:regularized_state} appearing in the regularized two-point distribution.
  Moreover, within $\chi_k$ only the potential $V$ is (smoothly on $\mathcal{U}[\tau_0,\tau_1]$) dependent on $H$, thus simplifying the computations considerably.\footnote{If we were to work in cosmological time as in \cite{pinamonti:2011}, we would also have to consider the functional dependence of conformal time on the scale factor.}
  Continuing with the regularized two-point distribution order by order as in proposition~\ref{prop:regularized_state}, we have:

  \begin{proof}[1\st Order]
    Since
    \begin{equation*}
      \gdif \left( \chi_k^0\, \conj\chi{}_k^1 + \chi_k^1\, \conj\chi{}_k^0 + \frac{V}{4 k_0^3} \right)(H, \delta H)(\tau)
      = \frac{1}{4 k_0^3} \int_{\tau_0}^\tau \cos\big( 2 k_0 (\eta - \tau) \big)\, \gdif V'(H, \delta H)(\eta)\, \dif\eta,
    \end{equation*}
    we can proceed with the proof as in proposition~\ref{prop:regularized_state} with $V'$ replaced by $\gdif V'$ and differentiablity follows.
    \noqed
  \end{proof}

  \begin{proof}[2\nd Order]
    As above, this part of the proof can be shown by replacing occurrences of $V$ and $V'$ in \eqref{eq:2nd_order_final} of proposition~\ref{prop:regularized_state} with $\gdif V$ and $\gdif V'$ respectively.
    \noqed
  \end{proof}

  \begin{proof}[Higher Orders]
    For orders $n > 2$ we can again use an estimate similar to \eqref{eq:estimate_chi_n} to obtain a result analogous to~\eqref{eq:higher_order_final}:
    \begin{multline*}
      \abs*{\gdif \left( \sum_{n=3}^\infty \sum_{l=0}^n \chi_k^l\, \conj\chi{}_k^{n-l} \right)(H, \delta H)(\tau)}
      \leq \frac{4}{k_0^4} \left( \int_{\tau_0}^\tau \abs[\big]{\gdif V(H, \delta H)(\eta)}\, \dif\eta \right) \left( \int_{\tau_0}^\tau \abs[\big]{V(\eta)}\, \dif\eta \right)^2
      \\ \times \exp\left( 2 \int_{\tau_0}^\tau (\tau - \eta)\, \abs[\big]{V(\eta)}\, \dif\eta \right).\noqed\qedhere
    \end{multline*}
  \end{proof}
  \noindent In this way we can conclude the proof of the present proposition.
\end{proof}

We can now formulate the main theorem of this paper:
\begin{theorem}\label{thm:main}
  Let $(a_0, H_0), a_0 > 0, \abs{H_0} < H_c$, be some initial conditions fixed at $\tau_0$ for the functional equation \eqref{eq:functional-volterra}.
  There is a non-empty interval $[\tau_0,\tau_1]$ and a closed subset $U \subset C[\tau_0,\tau_1]$ on which a unique solution to \eqref{eq:functional-volterra} exists.
\end{theorem}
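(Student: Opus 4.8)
The plan is to apply the Banach fixed-point theorem (in the form collected in the Appendix) to the map $F(H)(\tau) = H_0 + \int_{\tau_0}^\tau f(H)(\eta)\,\dif\eta$ on a suitable closed ball inside the open set $\mathcal{U}[\tau_0,\tau_1]$. First I would fix $\tau_0$ and $a_0 > 0$ and $H_0$ with $\abs{H_0} < H_c$, and choose a radius $\delta > 0$ such that the closed ball $\bar B_\delta(H_0) = \{ H \in C[\tau_0,\tau_1] : \norm{H - H_0}_\infty \le \delta \}$ still satisfies $\norm{H}_\infty < H_c$ uniformly; this is possible because $\abs{H_0} < H_c$ is a strict inequality, so we can take $\delta < H_c - \abs{H_0}$. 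Then I would shrink $\tau_1 - \tau_0$ so that the other constraint in the definition of $\mathcal{U}[\tau_0,\tau_1]$, namely $\norm{H}_\infty < (a_0(\tau_1-\tau_0))^{-1}$, also holds on the ball; since $\norm{H}_\infty \le \abs{H_0} + \delta$ is now a fixed finite number, it suffices to pick $\tau_1 - \tau_0 < (a_0(\abs{H_0}+\delta))^{-1}$. With these choices $\bar B_\delta(H_0) \subset \mathcal{U}[\tau_0,\tau_1]$ is closed, hence complete, and we set $U \defn \bar B_\delta(H_0)$.

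Next I would verify the two hypotheses of the contraction principle on $U$. For the self-mapping property $F(U) \subseteq U$: by Proposition~\ref{prop:gateaux_diffable}, $f$ is continuously Gâteaux differentiable, hence continuous, on $\mathcal{U}[\tau_0,\tau_1]$, so it is bounded on the compact-closure situation — more carefully, Lemma~\ref{lem:bounds} together with the explicit boundedness of $a(H)$ and $(H_c^2 - H^2)^{-1}$ on $U$ gives a finite constant $M = M(\tau_1 - \tau_0, a_0, H_0, \delta)$ with $\norm{f(H)}_\infty \le M$ for all $H \in U$; then $\norm{F(H) - H_0}_\infty \le (\tau_1 - \tau_0) M$, which is $\le \delta$ after a further shrinking of $\tau_1 - \tau_0$. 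For the contraction property: since $f$ is continuously Gâteaux differentiable on the convex set $U$, the mean value inequality gives $\norm{f(H_1) - f(H_2)}_\infty \le L \norm{H_1 - H_2}_\infty$ with $L = \sup_{H \in U}\norm{\dif f(H)}_{\mathrm{op}}$, and this supremum is finite by the estimates on the functional derivative of the Wick square established in the proof of Proposition~\ref{prop:gateaux_diffable} (again combined with Lemma~\ref{lem:bounds} and the bounds on $a$ and $(H_c^2 - H^2)^{-1}$). Consequently $\norm{F(H_1) - F(H_2)}_\infty \le (\tau_1 - \tau_0) L \norm{H_1 - H_2}_\infty$, which is a contraction once $\tau_1 - \tau_0 < 1/L$. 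Taking $\tau_1 - \tau_0$ smaller than all the finitely many thresholds above, $F$ is a contraction of the complete metric space $U$ into itself, and the Banach fixed-point theorem yields a unique $H \in U$ with $F(H) = H$, i.e., a unique solution of \eqref{eq:functional-volterra} on $[\tau_0,\tau_1]$.

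The main obstacle is not the abstract fixed-point machinery but the bookkeeping needed to make the constants $M$ and $L$ genuinely finite and to see that they do not blow up as $\tau_1 \to \tau_0$: one must check that the dependence of the Wick-square bounds in Lemma~\ref{lem:bounds} on $\sup a$, $\sup a'$, $1/\inf a$ and $\tau_1 - \tau_0$ stays controlled on the ball $U$, and in particular that $\inf_{[\tau_0,\tau_1]} a(H)$ stays bounded away from zero — which is exactly what the constraint $\norm{H}_\infty < (a_0(\tau_1-\tau_0))^{-1}$ in $\mathcal{U}[\tau_0,\tau_1]$ guarantees via formula \eqref{eq:a_of_H}. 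A secondary subtlety is that the order in which the thresholds on $\delta$ and $\tau_1 - \tau_0$ are chosen matters (one fixes $\delta$ first, then shrinks $\tau_1 - \tau_0$), since $M$ and $L$ themselves depend on $\delta$; but because each threshold only requires $\tau_1 - \tau_0$ to be smaller, a single nonempty interval $[\tau_0,\tau_1]$ works.
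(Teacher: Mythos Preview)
Your argument is correct and follows the same overall strategy as the paper: use Proposition~\ref{prop:gateaux_diffable} to get continuous G\^ateaux differentiability of $f$ on $\mathcal{U}[\tau_0,\tau_1]$, produce a closed set $U$ on which $F$ is a self-map, and apply the Banach fixed-point theorem. The one substantive difference is in how the contraction is obtained. You shrink $\tau_1-\tau_0$ a third time until $(\tau_1-\tau_0)L < 1$, making $F$ itself contractive; the paper instead keeps the interval produced by Proposition~\ref{prop:closed} and invokes Proposition~\ref{prop:contraction}, which exploits the retarded Volterra structure of $F$ to show that some iterate $F^n$ is contractive via the standard $K^n(\tau_1-\tau_0)^n/n!$ estimate. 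Both routes rest on the same uniform bound on $\gdif f$ over $U$, so the ingredients are identical; your route is slightly more elementary, while the paper's packaging through the appendix propositions is what gets reused verbatim in the extension argument of Section~\ref{sec:global}.
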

\begin{proof}
  In proposition~\ref{prop:gateaux_diffable} we showed that $f$ is continuously Gâteaux differentiable on $\mathcal{U}[\tau_0,\tau_1]$ for any $\tau_1$.
  Using proposition~\ref{prop:closed}, we can thus find a $\tau_1 > \tau_0$ and a closed subset $U \subset \mathcal{U}[\tau_0,\tau_1]$ such that $F(U) \subset U$.
  It then follows from proposition~\ref{prop:contraction} that $F$ has a unique fixed point in $U$.
\end{proof}

Notice that the solution provided by the previous theorem is actually more regular, it is at least differentiable.
Thus the corresponding spacetime is at least $C^2$ and has well-defined curvature tensors.
The extra regularity is provided by the equation \eqref{eq:functional-volterra} and can be easily seen when it is written in differential form.
Unfortunately, it is very difficult to go beyond this regularity, because the employed state is not regular enough to permit the evaluation of higher derivatives of products of fields.

%----------------------------------------------------------------------------%

\section{Global solutions}
\label{sec:global}

In this section we would like to show that it is always possible to extend a `regular' local solution up to the point where either $H^2$ becomes bigger than $H_c^2$ or $a$ diverges.\footnote{$H^2 = H_c^2$ corresponds to a singularity in the derivative of $H$ in the differential form of \eqref{eq:functional-volterra}.}
To this end we start giving a definition we shall use below.

\begin{definition}\label{def:regular}
  A continuous solution $H_*$ of \eqref{eq:functional-volterra} in the interval $[\tau_0,\tau_1]$ with initial conditions
  \begin{equation*}
    a(H_*)(\tau_0) = a_0,
    \qquad
    H_*(\tau_0)^2 = H_0^2 = \frac{1}{3} \big( \rho(\tau_0) + \Lambda \big)
  \end{equation*}
  will be called \emph{regular}, if no singularity for either $a$, $H_*$ or $H_*'$ is encountered in $[\tau_0,\tau_1]$.
  Namely, $H_*$ must satisfy the following conditions:
  \begin{enumerate}
    \item $\norm*{H_*(\tau)}_{C[\tau_0,\tau_1]} < H_c$,
    \item $a_0 \int_{\tau_0}^{\tau} H_*(\eta)\, \dif\eta < 1$ for every $\tau$ in $[\tau_0,\tau_1]$.
  \end{enumerate}
\end{definition}

We remark that a local solution obtained from theorem~\ref{thm:main} is a regular solution.
Henceforth, assume that we have a regular solution $H_*$ as described in the definition.
Notice that condition a) ensures that no singularity in $H_*'$ is met in $[\tau_0,\tau_1]$.
Condition b), on the other hand, ensures that $a$ does not diverge in the interval $[\tau_0,\tau_1]$. Moreover, both a) and b) together imply that $a$ is strictly positive, as can be seen from~\eqref{eq:a_of_H}.

We would like to prove that a regular solution can always be extended in $C[\tau_0,\tau_2]$ for a sufficiently small $\tau_2 - \tau_1 > 0$.
To this end, let us again consider the set
\begin{equation*}
  \mathcal{U}[\tau_1,\tau_2] \defn \Big\{ H \in C[\tau_1, \tau_2] \;\Big|\; \norm{H}_{C[\tau_1,\tau_2]} < \min\big\{ a_1^{-1} (\tau_2 - \tau_1)^{-1}, H_c \big\} \Big\}
\end{equation*}
defined in \eqref{eq:set_U} and where $a_1 \defn a(H_*)(\tau_1)$ is the value assumed by the solution $a(H_*)$ at $\tau_1$.
Now we can give a proposition similar to proposition~\ref{prop:gateaux_diffable}, namely:
\begin{proposition}\label{prop:gateaux_diffable_ext}
  Let $H_*$ be a solution of \eqref{eq:functional-volterra} in $C[\tau_0,\tau_1]$ which is also regular.
  The functional $f(H)$ (see eq.~\eqref{eq:functional_f}), when evaluated on regular extensions of $H_*$ in $\mathcal{U}[\tau_1,\tau_2]$, is continuously Gâteaux differentiable for arbitrary $\tau_2 > \tau_1$.
\end{proposition}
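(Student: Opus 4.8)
The plan is to reproduce the argument of Proposition~\ref{prop:gateaux_diffable} almost verbatim, the single new feature being that the reference state $\omega$ stays anchored at $\tau_0$, so that the modes entering the renormalized Wick square \eqref{eq:renormalized_wick_square} at a time $\tau \in [\tau_1,\tau_2]$ depend on $H$ over the whole interval $[\tau_0,\tau]$. Since $H_*$ is a \emph{fixed} regular solution on $[\tau_0,\tau_1]$, every quantity involved will decompose into a frozen contribution determined by $H_*$ on $[\tau_0,\tau_1]$ and a contribution depending on the extension $H|_{[\tau_1,\tau_2]}$.

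First I would record how the auxiliary functions behave. For a regular extension $H$ of $H_*$ lying in $\mathcal{U}[\tau_1,\tau_2]$, formula \eqref{eq:a_of_H} gives, on $[\tau_1,\tau_2]$,
\begin{equation*}
  a(H)(\tau) = a_1 \Big( 1 - a_1 \int_{\tau_1}^\tau H(\eta)\, \dif\eta \Big)^{-1}, \qquad a_1 \defn a(H_*)(\tau_1),
\end{equation*}
so that $a$, $a^{-2}$, $\ln a$, $V = m^2(a^2 - a_0^2)$ and $V' = 2 m^2 a^3 H$ depend smoothly on $H|_{[\tau_1,\tau_2]}$ exactly as in Proposition~\ref{prop:gateaux_diffable}, with uniform bounds ensured by the defining inequality of $\mathcal{U}[\tau_1,\tau_2]$ together with the regularity of $H_*$, which makes $a_1$ finite and $\int_{\tau_0}^{\tau_1}\abs{V}$ finite; moreover $(H_c^2 - H^2)^{-1}$ and $a(H)$ are bounded on $[\tau_1,\tau_2]$ by $\norm{H}_{C[\tau_1,\tau_2]} < \min\{a_1^{-1}(\tau_2-\tau_1)^{-1}, H_c\}$. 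The modes $\chi_k = \sum_n \chi_k^n$ still solve the recursion \eqref{eq:inverted_recurrence} with lower endpoint $\tau_0$; for $\tau \in [\tau_1,\tau_2]$ each $\eta$-integral (and each nested one) splits at $\tau_1$, the part over $[\tau_0,\tau_1]$ being determined solely by $H_*$, and the estimates \eqref{eq:estimate_chi_n}, \eqref{eq:higher_order_final} continue to hold with all integrals taken over $[\tau_0,\tau]$, $\tau \leq \tau_2$, which are finite by regularity of $H_*$ and the $\mathcal{U}[\tau_1,\tau_2]$-condition.

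Then I would rerun the order-by-order analysis of Propositions~\ref{prop:regularized_state} and \ref{prop:gateaux_diffable} for the $\kk$-integral in \eqref{eq:renormalized_wick_square} evaluated at $\tau \in [\tau_1,\tau_2]$. The boundary term produced by the first-order integration by parts at $\eta = \tau_0$ still vanishes because $V(\tau_0) = m^2(a_0^2 - a_0^2) = 0$, so the finite remainder and the $E_1$/$\varepsilon \to 0^+$ argument (with its logarithmic, hence $\eta$-integrable, bound) go through unchanged; the only singular point is the diagonal $\eta = \tau \in [\tau_1,\tau_2]$, while at the junction $\eta = \tau_1$ the integrands $V$ and $V'$ are merely continuous, which is all that is needed. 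Differentiating, a variation $\delta H \in C[\tau_1,\tau_2]$ changes $V$ only on $[\tau_1,\tau_2]$, so $\gdif V(H,\delta H)$ and $\gdif V'(H,\delta H)$ are supported there and the differentiation argument of Proposition~\ref{prop:gateaux_diffable} applies with $V$, $V'$ replaced by $\gdif V$, $\gdif V'$; the second and higher orders are handled as in that proof using $\int_{\RR^3} k_0^{-4}\, \dif\kk < \infty$ and the analogue of \eqref{eq:higher_order_final}. Combining this with the smooth dependence and boundedness of the prefactors, the finite-term contributions in \eqref{eq:renormalized_wick_square}, and the bound of Lemma~\ref{lem:bounds} applied on $[\tau_0,\tau_2]$, one concludes that $f(H)$ is continuously Gâteaux differentiable on the regular extensions of $H_*$ in $\mathcal{U}[\tau_1,\tau_2]$.

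The hard part is not analytic but organisational: the main work is the careful bookkeeping of the split integrals and checking that neither the anchor $\tau = \tau_0$ nor the junction $\tau = \tau_1$ introduces a new singular contribution — the former being controlled precisely by $V(\tau_0) = 0$, the latter by continuity of $V$ and $V'$ across $\tau_1$ — together with verifying that all the constants stay finite, which is exactly what conditions a) and b) of Definition~\ref{def:regular} guarantee.
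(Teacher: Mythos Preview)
Your proposal is correct and follows essentially the same approach as the paper: both reduce the argument to Proposition~\ref{prop:gateaux_diffable} by noting that the state remains anchored at $\tau_0$, so all mode integrals run over $[\tau_0,\tau]$ with the $[\tau_0,\tau_1]$-part frozen by the regular solution $H_*$, and all bounds on $a$, $a'$, $V$, $V'$ over $[\tau_0,\tau_2]$ are controlled via the regularity of $H_*$ on the first subinterval and the $\mathcal{U}[\tau_1,\tau_2]$-condition on the second. The paper's own proof is terser---it packages the control on $[\tau_0,\tau_2]$ into the max-decompositions $\norm{a}_{C[\tau_0,\tau_2]} = \max\{\norm{a}_{C[\tau_0,\tau_1]},\norm{a}_{C[\tau_1,\tau_2]}\}$ (and similarly for $a'$ and $1/a$) and then simply invokes Lemma~\ref{lem:bounds}---whereas you spell out the split of the nested $\eta$-integrals and the vanishing of the boundary term $V(\tau_0)=0$; but these are the same argument at different levels of detail.
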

\begin{proof}
  The proof of this proposition can be obtained exactly as the proof of proposition~\ref{prop:gateaux_diffable}.
  However, the estimates we have obtained in lemma~\ref{lem:bounds} and the proof of proposition~\ref{prop:gateaux_diffable} cannot be applied straightforwardly because the state $\omega$ depends on the initial time $\tau_0$ and the initial datum $a_0$ through the construction described in section~\ref{sub:state_construction}.
  Moreover, the estimates of lemma~\ref{lem:bounds} depend on the knowledge of $a$ and $a'$ on the whole interval $[\tau_0,\tau_2]$.
  Luckily enough, we know that the solution $H_*$ is regular in $[\tau_0,\tau_1]$ while we know that the extension restricted to $[\tau_1,\tau_2]$ is in the set $\mathcal{U}[\tau_1,\tau_2]$, thus we just need to use the following estimates
  \begin{equation*}
    \norm{a}_{C[\tau_0,\tau_2]} = \max \big\{ \norm{a}_{C[\tau_1,\tau_2]}, \norm{a}_{C[\tau_0,\tau_1]} \big\},
    \quad
    \norm{a}_{C[\tau_0,\tau_2]}^{-1} = \max \big\{ \norm{a}^{-1}_{[\tau_1,\tau_2]}, \norm{a}^{-1}_{C[\tau_0,\tau_1]} \big\}
  \end{equation*}
  and
  \begin{equation*}
    \norm{a'}_{[\tau_0,\tau_2]} = \max \big\{ \norm{a'}_{[\tau_1,\tau_2]}, \norm{a'}_{[\tau_0,\tau_1]} \big\}.
  \end{equation*}
  With this in mind, we can again use lemma~\ref{lem:bounds} to control the boundedness of $\omega(\norder{\varphi^2})$.
  Then, making the replacements $\tau_0 \to \tau_1, \tau_1 \to \tau_2$ and $a_0 \to a_1$ at the \emph{appropriate} places in proposition~\ref{prop:gateaux_diffable}, one can see that estimates are not substantially influenced and that thesis still holds for $\mathcal{U}[\tau_1,\tau_2]$.
\end{proof}

Notice that it is always possible to fix $\tau_2$ such that $a_1^{-1} (\tau_2 - \tau_1)^{-1} \geq H_c$, whereby $\mathcal{U}[\tau_1,\tau_2]$ becomes the set of \emph{all} possible regular extensions of $H_*$ in $[\tau_1,\tau_2]$.
This guarantees that any extension in $\mathcal{U}[\tau_1,\tau_2]$ is the unique regular extension.

We are now ready to state the main theorem of the present section which can be proven exactly as theorem~\ref{thm:main}.
\begin{theorem}\label{thm:extension}
  Consider a solution $H_*(\tau)$ in $C[\tau_0,\tau_1]$ of the functional equation
  \eqref{eq:functional-volterra}.
  If the solution is regular in $[\tau_0,\tau_1]$, as defined in definition~\ref{def:regular},
  then it is possible to find a $\tau_2 > \tau_1$ such that, the solution $H_*$ can be extended uniquely to $C[\tau_0,\tau_2]$ and the solution is regular therein.
\end{theorem}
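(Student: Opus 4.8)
The plan is to mimic the proof of Theorem~\ref{thm:main} verbatim, with the triple of data $(\tau_0,\tau_1,a_0)$ replaced by $(\tau_1,\tau_2,a_1)$ where $a_1 \defn a(H_*)(\tau_1)$, but applied to the space of regular \emph{extensions} of $H_*$ rather than to functions defined on $[\tau_1,\tau_2]$ alone. Concretely, I would introduce the affine subspace of $C[\tau_0,\tau_2]$ consisting of continuous functions whose restriction to $[\tau_0,\tau_1]$ equals $H_*$; this is a closed subset of a Banach space (hence complete), and restricting attention to it pins down both the state $\omega$ (which depends on $\tau_0$ and $a_0$) and the value $a_1$ at the splicing point. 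On this subspace the functional $F$ from \eqref{eq:functional-volterra-short} reads, for $\tau \in [\tau_1,\tau_2]$,
\begin{equation*}
  F(H)(\tau) = H_*(\tau_1) + \int_{\tau_1}^{\tau} f(H)(\eta)\, \dif\eta,
\end{equation*}
so that a fixed point there is exactly a regular extension of $H_*$ solving \eqref{eq:functional-volterra} on $[\tau_0,\tau_2]$.

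First I would invoke Proposition~\ref{prop:gateaux_diffable_ext}, which already establishes that $f$, evaluated on regular extensions of $H_*$ in $\mathcal{U}[\tau_1,\tau_2]$, is continuously Gâteaux differentiable for any $\tau_2 > \tau_1$; this is the substantive input and it has been done. Next, exactly as in Theorem~\ref{thm:main}, I would apply Proposition~\ref{prop:closed} to the map $F$ on the set of regular extensions: continuous Gâteaux differentiability of $f$ gives a local Lipschitz bound, and since the extension interval $[\tau_1,\tau_2]$ can be taken arbitrarily short, one finds $\tau_2 > \tau_1$ and a closed subset $U \subset \mathcal{U}[\tau_1,\tau_2]$ (intersected with the affine subspace above) with $F(U) \subset U$. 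Then Proposition~\ref{prop:contraction} yields a unique fixed point of $F$ in $U$ — shrinking $\tau_2 - \tau_1$ further if necessary so that $F$ is a genuine contraction. Finally I would remark, as in the paragraph preceding the theorem, that $\tau_2$ may be chosen so that $a_1^{-1}(\tau_2-\tau_1)^{-1} \geq H_c$, which makes $\mathcal{U}[\tau_1,\tau_2]$ coincide with the set of \emph{all} regular extensions of $H_*$ on $[\tau_1,\tau_2]$; hence the fixed point is the unique regular extension, not merely unique within $U$. Regularity of the extended solution in $[\tau_0,\tau_2]$ is immediate: on $[\tau_0,\tau_1]$ it is $H_*$, which is regular by hypothesis, and on $[\tau_1,\tau_2]$ conditions a) and b) of Definition~\ref{def:regular} hold because the fixed point lies in $\mathcal{U}[\tau_1,\tau_2]$ and because, by continuity of $\int_{\tau_0}^{\tau} H_*\, \dif\eta$ at $\tau_1$ where it is $< 1/a_0$, the integral stays below $1/a_0$ on a right-neighbourhood of $\tau_1$ (shrink $\tau_2$ once more if needed).

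The only point requiring care — and the place where the present theorem is not literally a copy of Theorem~\ref{thm:main} — is the bookkeeping of the two intervals inside the Wick-square estimates, since $\omega(\norder{\varphi^2})(H)(\tau)$ for $\tau \in [\tau_1,\tau_2]$ still depends on the \emph{whole} history of $a$ back to $\tau_0$ through the modes $\chi_k$ of \eqref{eq:eom_modes}. This is exactly what Proposition~\ref{prop:gateaux_diffable_ext} and its proof already absorb, via the splitting $\norm{a}_{C[\tau_0,\tau_2]} = \max\{\norm{a}_{C[\tau_1,\tau_2]},\norm{a}_{C[\tau_0,\tau_1]}\}$ and the analogous identities for $\norm{a}^{-1}$ and $\norm{a'}$, using Lemma~\ref{lem:bounds} with $C$ monotone; on $[\tau_0,\tau_1]$ the relevant sup-norms are fixed finite numbers because $H_*$ is a given regular solution, and on $[\tau_1,\tau_2]$ they are controlled by membership in $\mathcal{U}[\tau_1,\tau_2]$. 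So no genuinely new estimate is needed; the proof is a faithful transcription of the argument for Theorem~\ref{thm:main} with the substitutions $\tau_0 \to \tau_1$, $\tau_1 \to \tau_2$, $a_0 \to a_1$ made at the appropriate places, together with the observation that restricting to extensions of $H_*$ keeps the problem inside a complete metric space.
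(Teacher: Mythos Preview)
Your proposal is correct and follows essentially the same route as the paper's proof: invoke Proposition~\ref{prop:gateaux_diffable_ext} (together with the norm-splitting bookkeeping and Lemma~\ref{lem:bounds}) for the estimates, apply Proposition~\ref{prop:closed} to obtain a closed invariant set $U\subset\mathcal{U}[\tau_1,\tau_2]$, and then Proposition~\ref{prop:contraction} for the fixed point, with the choice $a_1^{-1}(\tau_2-\tau_1)^{-1}\geq H_c$ ensuring uniqueness among all regular extensions. Two minor remarks: the extra shrinking ``so that $F$ is a genuine contraction'' is unnecessary since Proposition~\ref{prop:contraction} already handles this via an iterate $F^n$; and your separate shrinking for condition~b) is also superfluous, because membership in $\mathcal{U}[\tau_1,\tau_2]$ gives $a_1\int_{\tau_1}^\tau H<1$, which is algebraically equivalent to $a_0\int_{\tau_0}^\tau H<1$ via $a_1=a_0\big(1-a_0\int_{\tau_0}^{\tau_1}H_*\big)^{-1}$.
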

\begin{proof}
  Thanks to proposition~\ref{prop:gateaux_diffable_ext}, $f$ is continuously Gâteaux differentiable on all regular extensions of $H_*$ in $\mathcal{U}[\tau_1,\tau_2]$ for any $\tau_2$ such that $a_1^{-1} (\tau_2 - \tau_1)^{-1} \geq H_c$.
  With the remarks of the proof of proposition~\ref{prop:gateaux_diffable_ext} we can use lemma~\ref{lem:bounds} to estimate the boundedness of $\omega(\norder{\varphi^2})$ and apply proposition~\ref{prop:closed} to find a $\tau_2 > \tau_1$ and a closed subset $U \subset \mathcal{U}[\tau_1,\tau_2]$ such that $F(U) \subset U$.
  It then follows from proposition~\ref{prop:contraction} that $F$ has a unique fixed point in $U$.
\end{proof}

We study now the set $\mathcal{S}$ of all possible solutions of \eqref{eq:functional-volterra} which are defined on intervals of the form $[\tau_0,\tau)$, which are regular on any closed interval contained in their domain and which enjoy the same initial values $a_0 = a(\tau_0), H_0 = H(\tau_0)$.
The elements of $\mathcal{S}$ are indicated by $H_I$, where $I$ is the domain of $H_I$ of the form $[\tau_0,\tau_1)$.
We are looking for maximal solutions in this set, where a solution is maximal if it cannot be extended further.
To this end, we notice that it is possible to equip $\mathcal{S}$ with the following partial order relation
\begin{equation*}
  H_I \leq H_J \quad \text{if} \quad I \subset J.
\end{equation*}
Hence, by Zorn's lemma\footnote{It is possible to avoid using Zorn's lemma here and instead prove existence and uniqueness of maximal solution along the lines of the methods discussed in \cite{sbierski:2013}.} applied to the set of all solutions with the given initial values we have the following:
\begin{proposition}
  The maximal regular solution of \eqref{eq:functional-volterra} in $\mathcal{S}$ exists.
\end{proposition}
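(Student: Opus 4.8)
The plan is to apply Zorn's lemma to the partially ordered set $(\mathcal{S}, \leq)$ described above, so the main work is verifying the chain condition. First I would observe that $\mathcal{S}$ is nonempty: by Theorem~\ref{thm:main} there exists a local solution on some $[\tau_0,\tau_1]$ with the prescribed initial values $(a_0,H_0)$, and this solution is regular (as remarked after that theorem), so its restriction to $[\tau_0,\tau_1)$ is an element of $\mathcal{S}$.

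Next I would show every totally ordered chain $\{H_{I_\alpha}\}_{\alpha}$ in $\mathcal{S}$ has an upper bound. Set $I \defn \bigcup_\alpha I_\alpha$; since the $I_\alpha$ are intervals of the form $[\tau_0,\tau_\alpha)$ nested by inclusion, $I$ is again an interval of the form $[\tau_0,\sup_\alpha \tau_\alpha)$. Define $H_I$ on $I$ by $H_I(\tau) \defn H_{I_\alpha}(\tau)$ for any $\alpha$ with $\tau \in I_\alpha$; this is well-defined because the chain is totally ordered, so any two members agree on the intersection of their domains (each restricts to the other, being a solution of the Volterra equation \eqref{eq:functional-volterra} with the same initial data on the smaller interval — here one uses that a solution on $[\tau_0,\tau_1)$ restricted to $[\tau_0,\tau']$ with $\tau'<\tau_1$ is the unique solution there, which follows from Theorem~\ref{thm:main} applied on $[\tau_0,\tau']$). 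The function $H_I$ is continuous on $I$ since it is locally equal to a continuous function, and it satisfies \eqref{eq:functional-volterra} on all of $I$ because the integral equation is a local condition in $\tau$. Finally, $H_I$ is regular on every closed subinterval $[\tau_0,\tau'] \subset I$: such a $\tau'$ lies in some $I_\alpha$, and $H_{I_\alpha}$ is regular on $[\tau_0,\tau']$ by hypothesis, so conditions a) and b) of Definition~\ref{def:regular} hold there. Hence $H_I \in \mathcal{S}$ and $H_{I_\alpha} \leq H_I$ for all $\alpha$, so $H_I$ is the desired upper bound.

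Zorn's lemma then yields a maximal element $H_{I_{\max}}$ of $\mathcal{S}$, which is by definition the maximal regular solution. I expect the only genuinely delicate point to be the well-definedness and consistency of $H_I$ on the union — that is, making sure the members of the chain really do agree on overlaps rather than merely being two a priori different regular solutions with the same initial data. This is where the local uniqueness coming from Theorem~\ref{thm:main} (equivalently, the contraction-mapping argument of propositions~\ref{prop:closed} and~\ref{prop:contraction}) is essential, and one should note that uniqueness propagates: if two regular solutions agree on $[\tau_0,\tau_0+\delta]$ and one attempts to show they agree on their common interval, a standard connectedness argument on the set where they coincide (closed by continuity, open by local uniqueness via Theorem~\ref{thm:extension}) does the job. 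Everything else is bookkeeping about nested intervals.
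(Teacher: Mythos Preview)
Your proposal is correct and takes exactly the same approach as the paper: the paper's proof is literally the one-line invocation ``by Zorn's lemma applied to the set of all solutions with the given initial values,'' with no further details. You have simply supplied the standard verification of nonemptiness and the chain condition that the paper leaves to the reader, and you correctly flag the one genuine subtlety (agreement on overlaps, handled via the local uniqueness from Theorems~\ref{thm:main} and~\ref{thm:extension}) that the paper's terse statement glosses over.
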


Since the maximal solution $H_M$ in $\mathcal{S}$ whose domain is $[\tau_0,\tau_M)$ cannot be extended further, this means that its trivial extension on $[\tau_0,\tau_M]$ does not fulfill the hypotheses of theorem~\ref{thm:extension}.
Hence, either its domain is extended until the end of conformal time, \ie, in cosmological time until infinity, or it ceases to be regular on $[\tau_0,\tau_M]$, namely a divergence of $a$ or $H'$ is found in $\tau_M$.

Furthermore, an easy application of the theorem~\ref{thm:extension} permits to obtain the uniqueness of the maximal solution:
\begin{proposition}
  The maximal regular solution of \eqref{eq:functional-volterra} in $\mathcal{S}$ is unique.
\end{proposition}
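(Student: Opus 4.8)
The plan is to argue by contradiction, combining the local uniqueness furnished by Theorem~\ref{thm:main} with the uniqueness of regular extensions furnished by Theorem~\ref{thm:extension}. Suppose $H_M$ and $\tilde H_M$ are two maximal regular solutions in $\mathcal{S}$, with domains $[\tau_0,\tau_M)$ and $[\tau_0,\tilde\tau_M)$ respectively. First I would observe that the two solutions agree on a right-neighbourhood of $\tau_0$: Theorem~\ref{thm:main} produces, for the common initial data $(a_0,H_0)$, a \emph{unique} solution of \eqref{eq:functional-volterra} on some $[\tau_0,\tau_1]$, and since $H_M$ and $\tilde H_M$ restrict to regular (in particular fixed-point) solutions on every closed subinterval of their domains, both must coincide with that local solution near $\tau_0$.

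Next, set $\tau^\ast \defn \sup\{\, \tau \in [\tau_0, \min\{\tau_M,\tilde\tau_M\}) : H_M \equiv \tilde H_M \text{ on } [\tau_0,\tau] \,\}$, which by the previous step is well defined and $> \tau_0$. By continuity $H_M \equiv \tilde H_M$ on $[\tau_0,\tau^\ast)$, and if $\tau^\ast < \min\{\tau_M,\tilde\tau_M\}$ then also on the closed interval $[\tau_0,\tau^\ast]$; the common restriction $H_\ast$ is then a regular solution on $[\tau_0,\tau^\ast]$ in the sense of Definition~\ref{def:regular}. Applying Theorem~\ref{thm:extension} to $H_\ast$ — and invoking the remark after Proposition~\ref{prop:gateaux_diffable_ext}, which lets one choose $\tau_2$ so that $\mathcal{U}[\tau^\ast,\tau_2]$ is the set of \emph{all} regular extensions and hence the extension is genuinely unique — both $H_M$ and $\tilde H_M$ must coincide with this unique extension on $[\tau_0,\tau_2]$ for some $\tau_2 > \tau^\ast$, contradicting the definition of $\tau^\ast$. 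Therefore $\tau^\ast = \min\{\tau_M,\tilde\tau_M\}$, i.e.\ $H_M \equiv \tilde H_M$ on the whole common domain.

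Finally I would rule out $\tau_M \neq \tilde\tau_M$: if, say, $\tau_M < \tilde\tau_M$, then $\tilde H_M$ restricted to $[\tau_0,\tau_M]$ is a regular solution that properly extends $H_M$, which contradicts the maximality of $H_M$; the reverse inequality is symmetric. Hence $\tau_M = \tilde\tau_M$ and $H_M = \tilde H_M$, proving uniqueness of the maximal regular solution.

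The only delicate point — and the main (though minor) obstacle — is that the local and extension uniqueness statements are phrased for a specific endpoint $\tau_1$ (resp.\ $\tau_2$), so one must ensure they glue into uniqueness on overlapping intervals. This is exactly what the remark following Proposition~\ref{prop:gateaux_diffable_ext} provides, since it identifies $\mathcal{U}[\tau_1,\tau_2]$ with the collection of all regular extensions once $a_1^{-1}(\tau_2-\tau_1)^{-1} \geq H_c$; everything else is the standard continuation/maximality bookkeeping used for ODEs.
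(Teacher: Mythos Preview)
Your argument is correct and follows essentially the same route as the paper's proof: define the first time where two putative maximal solutions could disagree and invoke the unique-extension result (Theorem~\ref{thm:extension}) to derive a contradiction. Your version is in fact more careful than the paper's terse proof, since you explicitly handle the initial coincidence via Theorem~\ref{thm:main}, invoke the remark after Proposition~\ref{prop:gateaux_diffable_ext} to ensure uniqueness among \emph{all} regular extensions, and separately rule out unequal domains by maximality.
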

\begin{proof}
  Suppose to have two maximal solutions, then they must differ in certain closed time-interval $I \subset \RR$.
  Since $I$ is bounded from below by $\tau_0$, the minimum $\tau_{\mathrm{min}}$ of $I$ exists and it is bigger than $\tau_0$.
  But because of theorem~\ref{thm:extension} there is a unique extension after $\tau_{\mathrm{min}}$, hence the two maximal solutions must coincide.
\end{proof}

As for the solution provided by theorem~\ref{thm:main}, also the maximal solution obtained above correspond to a metric with $C^2$ regularity.

%----------------------------------------------------------------------------%

\section{Final comments}
\label{sec:final}

In this paper we have studied the backreaction of a quantum massive scalar field conformally coupled with gravity to cosmological spacetimes.
We have given initial conditions at finite time $\tau=\tau_0$ and we have shown that a unique maximal solution exists. The maximal solution either lasts forever or until a spacetime singularity is reached.

In order to obtain this result, we have used a state which looks as much as possible like the vacuum at the initial time.
Notice that it is possible to choose other classes of states without significantly altering the results obtained in this paper.
In particular, if we restrict ourself to Gaussians pure state which are homogeneous and isotropic, their two-point function takes the form
\begin{equation*}\label{eq:two_point_other}
  \widetilde\omega_2(x, y) = \lim_{\varepsilon \to 0^+} \frac{1}{(2 \uppi)^3} \int_{\RR^3} \frac{\conj\xi_k(\tau_x)}{a(\tau_x)} \frac{\xi{}_k(\tau_y)}{a(\tau_y)}\, \e^{\im \vec{k} \cdot (\vec{x} - \vec{y})} \e^{-\varepsilon k}\, \dif\vec{k},
\end{equation*}
where $\xi_k$ are solutions of \eqref{eq:eom_modes} which enjoy the Wronskian condition $\xi_k' \conj\xi{}_k^{\phantom{'}} - \xi_k^{\phantom{'}} \conj\xi{}_k' = \im$.
These $\chi_k$ can then be written as a Bogoliubov transformation of the modes $\chi_k$ studied earlier in this paper, namely,
\begin{equation*}
  \xi_k = A(k) \chi_k + \conj{B}(k) \conj\chi_k
\end{equation*}
for suitable functions $A$ and $B$.
Then, because of the constraint $\abs{A}^2 - \abs{B}^2 = 1$, the difference
\begin{equation*}
  \widetilde\omega(\norder{\varphi^2}) - \omega(\norder{\varphi^2}) =
  \lim_{\varepsilon \to 0^+} \frac{1}{(2 \uppi)^3} \frac{2}{a^2} \int_{\RR^3} \left(
    \abs{B}^2\chi_k\conj\chi{}_k + \Re{\left(A B\chi_k\chi_k\right)}
  \right) \e^{-\varepsilon k}\, \dif\vec{k}
\end{equation*}
can be easily controlled employing \eqref{eq:estimate_chi_n} if $\abs{B}$ is sufficiently regular (\eg if $B(k)$ is in $L^2 \cap L^1$).\footnote{A detailed analysis of this problem is present in \cite{zschoche:2013}.}
With this observation it is possible to obtain again all the estimates used in the proofs of the theorems \ref{thm:main} and~\ref{thm:extension}.

In the future, it would be desirable to study the semiclassical equations in more general cases, namely for more general fields, abandoning for example the conformal coupling, and for more general background geometries.
The results presented here cannot straightforwardly be extended to fields which are not conformally coupled to curvature or to spacetimes that are not conformally flat because in that case fourth order derivatives of the metric originating in the conformal anomaly cannot be cancelled by a judicious choice of renormalization parameters, \ie, Wald's fifth axiom \cite{wald:1977} cannot be satisfied.
To still solve the semiclassical Einstein equation with methods similar to those presented here,
a deeper analysis of the states is required, in particular, one needs states of higher regularity.
A preliminary study in this direction can be found in a paper of Eltzner and Gottschalk \cite{eltzner:2011}, where the semiclassical Einstein equation on a FLRW background with non-conformally coupled scalar field is discussed.
The case of backgrounds which are only spherically symmetric is interesting from many perspectives.
Its analysis could give new hints on the problem of semiclassical black hole evaporation and confirm the nice two-dimensional results obtained in \cite{ashtekar:2010}.
Finally, the limit of validity of the employed equation needs to be carefully addressed in the future.

%----------------------------------------------------------------------------%

\begin{acknowledgments}
  We would like to thank T.-P. Hack and V. Moretti for helpful discussions.
  The work of N.P. has been supported partly by the Indam-GNFM project ``Influenza della materia quantistica sulle fluttuazioni gravitazionali'' (2013).
\end{acknowledgments}

%----------------------------------------------------------------------------%

\appendix

\section{Banach fixed-point theorem}
\label{sec:appendix}

In this appendix we will review the Banach fixed-point theorem and a few related results on contraction maps and functional derivatives.
Let us begin by stating (without proof) the famous Banach fixed-point theorem:

\begin{theorem}[Banach fixed-point theorem]
  Let $f: U \to U$ be a contraction on a (non-empty) complete metric space $U$.
  Then $f$ has a unique fixed-point $x = f(x)$.
  Furthermore, taking an arbitrary initial value $x_0 \in U$, $x$ is the limit of the sequence $\{x_n\}$ defined by the iterative procedure $x_{n+1} = f(x_n)$.
\end{theorem}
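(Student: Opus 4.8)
The plan is to run the classical Picard iteration argument. Fix a contraction constant $q \in [0,1)$ such that $d(f(x),f(y)) \le q\, d(x,y)$ for all $x, y \in U$. Starting from an arbitrary $x_0 \in U$, define the sequence $\{x_n\}$ by $x_{n+1} = f(x_n)$; this is well-defined since $f$ maps $U$ into itself. The first step is to control consecutive differences: by induction on $n$, applying the contraction property at each stage, one obtains $d(x_{n+1},x_n) \le q^n\, d(x_1,x_0)$.

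Next I would show that $\{x_n\}$ is a Cauchy sequence. For $m > n$, the triangle inequality together with the previous estimate gives
\[
  d(x_m, x_n) \le \sum_{k=n}^{m-1} d(x_{k+1},x_k) \le d(x_1,x_0) \sum_{k=n}^{m-1} q^k \le \frac{q^n}{1-q}\, d(x_1,x_0),
\]
and the right-hand side tends to $0$ as $n \to \infty$ because $q < 1$. Since $U$ is complete, the sequence converges to some $x \in U$; this $x$ is the candidate fixed point, and the "furthermore" clause of the statement will already be established once we verify that $x$ is fixed, since $x$ was obtained precisely as $\lim_n x_n$ for an arbitrary starting point $x_0$.

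To see that $x$ is indeed a fixed point, I would use that a contraction is Lipschitz and hence continuous, so that $f(x) = f(\lim_n x_n) = \lim_n f(x_n) = \lim_n x_{n+1} = x$. Uniqueness is then a one-line argument: if $x$ and $y$ are both fixed points, then $d(x,y) = d(f(x),f(y)) \le q\, d(x,y)$, whence $(1-q)\, d(x,y) \le 0$ and therefore $d(x,y) = 0$, i.e.\ $x = y$.

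There is no substantive obstacle here; the only point deserving a little care is the geometric-series estimate that yields the Cauchy property, which is exactly where the strict inequality $q < 1$ (rather than merely $q \le 1$) is used in an essential way — a non-expansive map on a complete metric space need not have a fixed point at all.
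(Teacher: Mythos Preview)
Your argument is the standard Picard iteration proof and is entirely correct. The paper, however, explicitly states this theorem \emph{without proof} (it is quoted only as background in the appendix), so there is no proof in the paper to compare against.
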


In practice it is often difficult to prove directly that a given map is a contraction.
Therefore we will enlighten in the following some sufficient conditions which guarantee that a given map of the type studied in this paper is a contraction map.
First, let us define a directional derivative, the Gâteaux derivative, according to~\cite{hamilton:1982}:
\begin{definition}\label{def:gateaux}
  Let $V, W$ be Fréchet spaces\footnote{The Gâteaux derivative can also be generalized to locally convex topological vector spaces.} and $U \subset V$ an open subset.
  The \emph{Gâteaux derivative} of a map $f: U \to W$ at the point $x \in U$ in the direction $h \in V$ is defined as the map $\gdif f: U \times V \to W$,
  \begin{equation*}
    \gdif f(x, h) \defn \lim_{\varepsilon \to 0} \frac{f(x + \varepsilon\, h) - f(x)}{\varepsilon} = \left. \od{}{\varepsilon}\, f(x + \varepsilon\, h) \right|_{\varepsilon=0}.
  \end{equation*}
  The map $f$ is called \emph{Gâteaux differentiable at $x$} if the limit exists for all $h \in V$.
  It is called \emph{continuously Gâteaux differentiable} if $\gdif f$ is continuous on the product space $U \times V$ for all $x$ and $h$.
\end{definition}

Among the properties of this derivative discussed in \cite{hamilton:1982}, we find:
\begin{proposition}\label{prop:gateaux_properties}
  Let $f: U \subset V \to W$ be a continuously Gâteaux differentiable map between Fréchet spaces $V, W$. Then:
  \begin{enumerate}
    \item $\gdif f(x, h)$ is linear in $h$,
    \item
      the fundamental theorem of calculus holds if $U$ is convex, \ie,
      \begin{equation*}
        f(x + h) - f(x) = \int_0^1 \gdif f(x + t\, h, h)\, \dif t.
      \end{equation*}
  \end{enumerate}
\end{proposition}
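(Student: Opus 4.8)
The plan is to derive both claims from the definition together with the assumed joint continuity of $\gdif f$ on $U \times V$, the chain of implications running: one‑variable chain rule along a line segment $\Rightarrow$ the fundamental theorem of calculus for continuously differentiable curves into a Fréchet space $\Rightarrow$ part (b) $\Rightarrow$ additivity in part (a). Homogeneity in (a) costs nothing: for $\lambda \neq 0$ the substitution $\delta = \varepsilon\lambda$ in the difference quotient gives $\gdif f(x, \lambda h) = \lambda\, \gdif f(x, h)$ at once, and $\lambda = 0$ is trivial. So the real content is the additivity identity $\gdif f(x, h_1 + h_2) = \gdif f(x, h_1) + \gdif f(x, h_2)$ and the integral formula.

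For the backbone, fix $x \in U$ and $h \in V$ with $x + h \in U$, and---using convexity of $U$, so that $x + t h \in U$ for all $t \in [0,1]$---set $\phi(t) \defn f(x + t h)$ on $[0,1]$. First I would check that $\phi$ is continuously differentiable with $\phi'(t) = \gdif f(x + t h, h)$: the difference quotient $s^{-1}\big(\phi(t + s) - \phi(t)\big) = s^{-1}\big(f((x + t h) + s h) - f(x + t h)\big)$ converges to $\gdif f(x + t h, h)$ as $s \to 0$ directly from the definition (since $f$ is Gâteaux differentiable at $x + t h$ in direction $h$), and $t \mapsto \gdif f(x + t h, h)$ is continuous because $\gdif f$ is jointly continuous and $t \mapsto (x + t h, h)$ is continuous. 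Applying the fundamental theorem of calculus for $C^1$ curves valued in the Fréchet space $W$ then yields
\begin{equation*}
  f(x + h) - f(x) = \phi(1) - \phi(0) = \int_0^1 \phi'(t)\, \dif t = \int_0^1 \gdif f(x + t h, h)\, \dif t,
\end{equation*}
which is exactly part (b).

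Additivity then follows by a telescoping trick. For $\varepsilon$ small enough that $x + \varepsilon h_1 \in U$ (possible since $U$ is open) I would write
\begin{equation*}
  \frac{f\big(x + \varepsilon(h_1 + h_2)\big) - f(x)}{\varepsilon}
  = \frac{f(x + \varepsilon h_1 + \varepsilon h_2) - f(x + \varepsilon h_1)}{\varepsilon} + \frac{f(x + \varepsilon h_1) - f(x)}{\varepsilon}.
\end{equation*}
The second term tends to $\gdif f(x, h_1)$ by definition. For the first term I would apply part (b) at the base point $x + \varepsilon h_1$ in the direction $h_2$ (the relevant segment lies in $U$ by convexity for $\varepsilon$ small), rewriting it as $\int_0^1 \gdif f(x + \varepsilon h_1 + t \varepsilon h_2, h_2)\, \dif t$; since $x + \varepsilon h_1 + t \varepsilon h_2 \to x$ as $\varepsilon \to 0$ uniformly in $t \in [0,1]$, joint continuity of $\gdif f$ lets me pass the limit under the integral to obtain $\int_0^1 \gdif f(x, h_2)\, \dif t = \gdif f(x, h_2)$. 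Hence $\gdif f(x, h_1 + h_2) = \gdif f(x, h_1) + \gdif f(x, h_2)$.

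The only things to be careful about are purely vector‑valued analysis: the existence of the Riemann integral of a continuous curve into $W$ and the validity of the fundamental theorem of calculus there---in our application $W = C[\tau_0,\tau_1]$ is a Banach space, so this is classical, and in the general Fréchet setting one argues seminorm by seminorm (equivalently, via continuous linear functionals, which separate points)---together with the interchange of limit and integral in the last step, which is legitimate because the integrand converges uniformly in $t$ with respect to each continuous seminorm on $W$. None of these is a genuine obstacle; the only subtlety worth flagging is the restriction to small $\varepsilon$ so that all the line segments invoked actually lie inside $U$. These facts are precisely those established in \cite{hamilton:1982}, which we simply cite.
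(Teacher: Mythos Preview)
The paper does not actually prove this proposition; it merely quotes it from \cite{hamilton:1982}. Your sketch is correct and is essentially the argument given there: the one-variable reduction $\phi(t) = f(x + t h)$ together with the Fréchet-space fundamental theorem of calculus gives part (b), and additivity is then recovered by the telescoping identity you describe, using joint continuity of $\gdif f$ to pass the limit through the integral.

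One small remark on presentation: part (a) is stated \emph{without} the convexity hypothesis on $U$, so when you invoke part (b) in the additivity argument you should not appeal to convexity of $U$ itself. What you actually need---and what your phrase ``by convexity for $\varepsilon$ small'' presumably intends---is that any open subset of a Fréchet space contains a convex neighbourhood of each of its points; for $\varepsilon$ small enough the segments $\{x + \varepsilon h_1 + t \varepsilon h_2 : t \in [0,1]\}$ all lie in such a neighbourhood of $x$, and part (b) applies there. With that clarification the argument is complete.
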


Therefore, specializing to Banach spaces, one can show (\cf the more general result in~\cite[Theorem 5.1.3]{hamilton:1982}):
\begin{proposition}\label{prop:lipschitz}
  Let $f: U \subset V \to W$ be a continuously Gâteaux differentiable map between Banach spaces $V, W$ with the norms $\norm{\,\cdot\,}_V$ and $\norm{\,\cdot\,}_W$ respectively.
  Then $f$ is \emph{locally Lipschitz}, that is, for every convex neighborhood $\mathcal{N}$ of $x_0 \in U$ there exists a $K \geq 0$ such that for all $x_1, x_2 \in \mathcal{N}$
  \begin{equation*}
    \norm{f(x_1) - f(x_2)}_W \leq K\, \norm{x_1 - x_2}_V.
  \end{equation*}
\end{proposition}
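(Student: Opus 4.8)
The plan is to deduce the local Lipschitz estimate directly from the fundamental theorem of calculus for the Gâteaux derivative (part (b) of Proposition \ref{prop:gateaux_properties}) together with the continuity of $\gdif f$ on $U \times V$. First I would fix $x_0 \in U$ and a convex neighbourhood $\mathcal{N}$ of $x_0$ contained in $U$; since $U$ is open in the Banach space $V$, one may take $\mathcal{N}$ to be an open ball $B(x_0, r)$ for some $r > 0$, which is automatically convex. For $x_1, x_2 \in \mathcal{N}$, convexity guarantees that the segment $t \mapsto x_2 + t(x_1 - x_2)$, $t \in [0,1]$, stays inside $\mathcal{N} \subset U$, so Proposition \ref{prop:gateaux_properties}(b) applies with $x = x_2$ and $h = x_1 - x_2$, giving
\begin{equation*}
  f(x_1) - f(x_2) = \int_0^1 \gdif f\big( x_2 + t(x_1 - x_2),\, x_1 - x_2 \big)\, \dif t.
\end{equation*}
Taking $\norm{\,\cdot\,}_W$ and pulling it inside the integral yields $\norm{f(x_1) - f(x_2)}_W \leq \sup_{t \in [0,1]} \norm{\gdif f(x_2 + t(x_1-x_2), x_1 - x_2)}_W$.

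Next I would exploit linearity of $\gdif f$ in its second argument (Proposition \ref{prop:gateaux_properties}(a)): writing $h = x_1 - x_2$ and, if $h \neq 0$, $\hat h = h / \norm{h}_V$, we have $\gdif f(y, h) = \norm{h}_V\, \gdif f(y, \hat h)$ for every $y$. Hence it suffices to bound $\norm{\gdif f(y, u)}_W$ uniformly over $y$ in the closed ball $\overline{B(x_0, r/2)}$ (say, shrinking $\mathcal{N}$ to this slightly smaller ball so that the relevant segments lie in a set whose closure is still in $U$) and $u$ in the unit sphere of $V$. This is where the continuity hypothesis on $\gdif f$ enters: $\gdif f$ is continuous on $U \times V$, and the set $\overline{B(x_0, r/2)} \times \{ \norm{u}_V = 1 \}$ — on which I want the supremum to be finite — is where the main obstacle lies, since in an infinite-dimensional Banach space this set is \emph{not} compact, so continuity alone does not immediately give a finite bound.

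The way around this is the standard localization argument: I would argue by contradiction. If no such $K$ existed for the ball of radius $r/2$, there would be sequences $y_n \to y_* \in \overline{B(x_0, r/2)} \subset U$ and unit vectors $u_n$ with $\norm{\gdif f(y_n, u_n)}_W \to \infty$. But by continuity of $\gdif f$ at the point $(y_*, 0)$ — note $\gdif f(y_*, 0) = 0$ by linearity in the second slot — there is a neighbourhood of $(y_*, 0)$ on which $\norm{\gdif f}_W \leq 1$; in particular there is $\delta > 0$ with $\norm{\gdif f(y, v)}_W \leq 1$ whenever $\norm{y - y_*}_V < \delta$ and $\norm{v}_V \leq \delta$. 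For large $n$ we have $\norm{y_n - y_*}_V < \delta$, and applying this with $v = \delta u_n$ (norm exactly $\delta$) together with linearity gives $\norm{\gdif f(y_n, u_n)}_W = \delta^{-1} \norm{\gdif f(y_n, \delta u_n)}_W \leq \delta^{-1}$, contradicting $\norm{\gdif f(y_n, u_n)}_W \to \infty$. This covers all $y_*$ in the compact-free way we need because we only ever use continuity \emph{at the single point} $(y_*, 0)$; running this for every $y_*$ in the ball and using that a uniform bound near each such point patches together (again by a contradiction/sequential argument over the ball) produces a finite $K$ valid on $\overline{B(x_0, r/2)}$. Combining, for $x_1, x_2$ in $B(x_0, r/2)$ we get $\norm{f(x_1) - f(x_2)}_W \leq K \norm{x_1 - x_2}_V$, which is the claimed local Lipschitz property (relabelling $r/2$ as the radius of the convex neighbourhood). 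The case $x_1 = x_2$ is trivial, and one notes the argument only ever uses the hypotheses as stated, so the proof is complete.
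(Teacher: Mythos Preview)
Your overall strategy---apply the fundamental theorem of calculus (Proposition~\ref{prop:gateaux_properties}(b)) along the segment and then bound $\norm{\gdif f(y,h)}_W$ via a local bound on the operator norm---is exactly the paper's approach. The difficulty is in how you justify the local boundedness of $y \mapsto \opnorm{\gdif f(y)}$, and there your argument has a genuine gap.

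You write: ``If no such $K$ existed for the ball of radius $r/2$, there would be sequences $y_n \to y_* \in \overline{B(x_0,r/2)}$ and unit vectors $u_n$ with $\norm{\gdif f(y_n,u_n)}_W \to \infty$.'' In an infinite-dimensional Banach space a bounded sequence $(y_n)$ need not have any convergent subsequence, so you cannot assume $y_n \to y_*$. Your subsequent ``patching'' remark (``a uniform bound near each such point patches together'') has the same problem: you have produced, for each $y_*$, a $\delta(y_*)$-ball on which $\opnorm{\gdif f}$ is bounded, but turning this open cover into a finite subcover---and hence a uniform bound---again needs compactness of $\overline{B(x_0,r/2)}$, which fails.

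The fix is to drop the attempt to bound $\opnorm{\gdif f}$ on a \emph{prescribed} ball and simply let the neighbourhood be dictated by continuity at the single point $(x_0,0)$. Your own computation already gives this: joint continuity of $\gdif f$ at $(x_0,0)$ yields $\delta>0$ with $\norm{\gdif f(y,v)}_W \le 1$ whenever $\norm{y-x_0}_V<\delta$ and $\norm{v}_V\le\delta$, hence $\opnorm{\gdif f(y)}\le \delta^{-1}$ for all $y\in B(x_0,\delta)$. Take $\mathcal{N}=B(x_0,\delta)$ and $K=\delta^{-1}$; then the fundamental theorem of calculus finishes the proof exactly as you wrote. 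This is precisely what the paper does (it simply asserts the existence of such a neighbourhood without spelling out the $\delta$-argument).
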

\begin{proof}
  Since the derivative $\gdif f(x, h)$ is continuous and linear in $h$, there exists a (convex) neighborhood $\mathcal{N}$ of $x_0$ such that\footnote{The operator norm $\opnorm{A}$ of a linear operator $A: V \to W$ between two normed vector spaces $V, W$ is defined as $\opnorm{A} \defn \sup \{ \norm{A(x)}_W \mid x \in V \text{ with } \norm{x}_V \leq 1 \}$}
  \begin{equation*}
    \norm{\gdif f(x, h)}_W \leq \opnorm{\gdif f(x)} \norm{h}_V \leq K\, \norm{h}_V
  \end{equation*}
  for all $x \in \mathcal{N}$.
  As \emph{Lipschitz constant} $K$ we can choose the supremum of $x \mapsto \opnorm{\gdif f(x)}$ in $\mathcal{N}$.
  By the fundamental theorem of calculus we have for $x_1, x_2 \in \mathcal{N}$
  \begin{equation*}
    f(x_1) - f(x_2) = \int_0^1 \gdif f\big( x_2 + t\, (x_1 - x_2), x_1 - x_2 \big)\, \dif t.
  \end{equation*}
  Hence, taking the norm on both sides, the previous equation yields
  \begin{equation*}
    \norm{f(x_1) - f(x_2)}_W
    \leq \int_0^1 \norm*{\gdif f\big( x_2 + t\, (x_1 - x_2), x_1 - x_2 \big)}_W\, \dif t
    \leq K\, \norm{x_1 - x_2}_V.
    \qedhere
  \end{equation*}
\end{proof}

The Gâteaux derivative is closely related to the Fréchet derivative:
\begin{definition}\label{def:frechet}
  Again, let $V, W$ be Banach spaces and $U \subset V$ an open subset.
  A map $f: U \to W$ is called \emph{Fréchet differentiable at $x \in U$} if there exists a bounded linear operator $\fdif f(x, h)$, the \emph{Fréchet derivative} of $f$ at $x$, such that
  \begin{equation*}
    \lim_{\norm{h} \to 0} \frac{\norm{f(x + h) - f(x) - \fdif f(x, h)}_W}{\norm{h}_V} = 0.
  \end{equation*}
\end{definition}
\begin{proposition}\label{prop:gateaux_frechet}
  Given the definitions above, if $f$ is continuously Gâteaux differentiable in a neighborhood of $x \in U$, then $f$ is also Fréchet differentiable at $x$ and the two derivatives agree.
\end{proposition}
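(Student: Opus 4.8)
The plan is to exhibit the Fréchet derivative of $f$ at $x$ explicitly as the Gâteaux derivative $\gdif f(x, \cdot)$ and then verify the defining limit of Definition~\ref{def:frechet}. By Proposition~\ref{prop:gateaux_properties}a) the map $h \mapsto \gdif f(x, h)$ is linear, and restricting the joint continuity of $\gdif f$ to $\{x\} \times V$ (equivalently, using the local boundedness of $x \mapsto \opnorm{\gdif f(x)}$ established in the proof of Proposition~\ref{prop:lipschitz}) shows it is a bounded operator, so it is a legitimate candidate for $\fdif f(x, \cdot)$. What is left is the estimate on the remainder.

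First I would fix $r > 0$ so that the open ball $B(x, r)$ is contained in the neighborhood of $x$ on which $f$ is continuously Gâteaux differentiable; this ball is convex. For $h \in V$ with $\norm{h}_V < r$ the whole segment $\{ x + t h : t \in [0,1] \}$ lies in $B(x, r)$, so the fundamental theorem of calculus, Proposition~\ref{prop:gateaux_properties}b), gives
\[
  f(x + h) - f(x) = \int_0^1 \gdif f(x + t h, h)\, \dif t .
\]
Since $\gdif f(x, h) = \int_0^1 \gdif f(x, h)\, \dif t$ trivially, subtracting and using linearity in the second argument yields
\[
  f(x + h) - f(x) - \gdif f(x, h) = \int_0^1 \big( \gdif f(x + t h) - \gdif f(x) \big)(h)\, \dif t ,
\]
and hence, passing to norms,
\[
  \norm{f(x+h) - f(x) - \gdif f(x, h)}_W
  \leq \norm{h}_V \int_0^1 \opnorm{\gdif f(x + t h) - \gdif f(x)}\, \dif t
  \leq \norm{h}_V \sup_{t \in [0,1]} \opnorm{\gdif f(x + t h) - \gdif f(x)} .
\]

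Dividing by $\norm{h}_V$, the statement reduces to showing that $\sup_{t \in [0,1]} \opnorm{\gdif f(x + t h) - \gdif f(x)} \to 0$ as $\norm{h}_V \to 0$; all the points $x + t h$ lie within distance $\norm{h}_V$ of $x$ uniformly in $t$, so this is a consequence of the continuity of the Gâteaux derivative at $x$. I expect this last step to be the only genuine obstacle: the hypothesis provides joint continuity of $\gdif f$ on $U \times V$, and one must use it to control $\opnorm{\gdif f(x + t h) - \gdif f(x)}$ \emph{uniformly} in $t$ (equivalently, uniformly over directions in the unit ball of $V$); in the Banach-space setting this is exactly what "continuously Gâteaux differentiable" delivers and is handled as in~\cite{hamilton:1982}. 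Once this is in hand the remainder quotient tends to $0$, so $f$ is Fréchet differentiable at $x$ with $\fdif f(x, h) = \gdif f(x, h)$, which proves both assertions.
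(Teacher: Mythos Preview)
Your proposal is correct and follows essentially the same route as the paper: apply the fundamental theorem of calculus (Proposition~\ref{prop:gateaux_properties}b) to write $f(x+h)-f(x)-\gdif f(x,h)$ as an integral of $\gdif f(x+th)-\gdif f(x)$ applied to $h$, then bound by $\norm{h}_V\sup_{t\in[0,1]}\opnorm{\gdif f(x+th)-\gdif f(x)}$ and invoke continuity. You are in fact more explicit than the paper, which writes the same inequality (with an auxiliary point $z$ later set to $x$) and concludes without commenting on the uniformity issue you flag; your acknowledgement that this last step rests on the Banach-space consequence of Hamilton's joint-continuity hypothesis is appropriate.
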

\begin{proof}
  As in proposition~\ref{prop:lipschitz}, since the derivative $\gdif f(z)$ is a continuous and linear operator, there exists a (convex) neighborhood $\mathcal{N}$ of $x$ where it is bounded.
  Using the fundamental theorem of calculus again, we obtain for $h, z \in \mathcal{N}$
  \begin{equation*}
    \norm{f(x + h) - f(x) - \gdif f(z, h)}_W \leq \sup_{t \in [0,1]} \opnorm{\gdif f(x + t h) - \gdif f(z)} \norm{h}_V.
  \end{equation*}
  In particular this holds for $z = x$ and thus $f$ is Fréchet differentiable at $x$ with $\fdif f(x) = \gdif f(x)$.
\end{proof}

Henceforth we will restrict our attention to Banach spaces $C[a, b]$ of continuous functions on an interval $[a, b]$ equipped with the \emph{uniform norm}
\begin{equation*}
  \norm{X}_{C[a,b]} \defn \norm{X}_\infty \defn \sup_{t \in [a,b]} \abs{X(t)},
\end{equation*}
where we will use $\norm{X}_{C[a,b]}$ instead of the more common $\norm{X}_\infty$ to emphasize the interval over which the supremum is taken.
Moreover, we will study maps $F: C[a, b] \to C[a, b]$ of the form (compare this to~\eqref{eq:functional-volterra-short})
\begin{equation}\label{eq:F-def}
  F(X)(t) \defn F_0(t) + \int_{a}^t f(X)(s)\, \dif s,
\end{equation}
where $F_0 \in C[a, b]$ and $f: C[a, b] \to C[a, b]$ is a `retarded' functional which satisfies $f(X + Y)(s) = f(X)(s), s \leq t$, for all $X, Y \in C[a, b]$ with $Y$ compactly supported in $(t,b]$.
It is obvious from \eqref{eq:F-def} that also $F$ is a retarded functional.

\begin{lemma}\label{lem:retarded_functional}
  The retarded functional $f$ can be restricted to a map $f_t$ on $C[a, t], t \in [a,b]$, \ie, $f(X)(s) = f_t(X \restriction_{[a,t]})(s)$ for every $X \in C[a,b]$ and $s \in [a,t]$.
  If $f$ is (continuously) Gâteaux differentiable, then so is $f_t$.
\end{lemma}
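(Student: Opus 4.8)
The plan is to first verify that the restriction $f_t$ is well defined, and then to transport Gâteaux differentiability through that restriction together with a single fixed linear extension operator.

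For the well-definedness, I would show that if $X_1, X_2 \in C[a,b]$ agree on $[a,t]$, then $f(X_1)(s) = f(X_2)(s)$ for every $s \in [a,t]$. Write $Y \defn X_1 - X_2$, so $Y$ vanishes identically on $[a,t]$; hence $\operatorname{supp} Y$ is contained in the compact set $[t,b]$, which for any $s' \in [a,t)$ lies inside $(s',b]$. Thus $Y$ is compactly supported in $(s',b]$ and the retardedness hypothesis, applied with $s'$ in place of $t$, gives $f(X_1)(s) = f(X_2 + Y)(s) = f(X_2)(s)$ for all $s \leq s'$. Letting $s' \uparrow t$ yields the equality on $[a,t)$, and since $f(X_1), f(X_2) \in C[a,b]$ are continuous it extends to $s = t$. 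Consequently $s \mapsto f(X)(s)$ on $[a,t]$ depends on $X$ only through $X\restriction_{[a,t]}$, so one may set $f_t \colon C[a,t] \to C[a,t]$, $f_t(g) \defn f(\tilde g)\restriction_{[a,t]}$ for any $\tilde g \in C[a,b]$ extending $g$; taking $\tilde g = X$ gives the asserted identity $f(X)(s) = f_t(X\restriction_{[a,t]})(s)$ for $s \leq t$.

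For the differentiability, fix once and for all the isometric linear extension operator $E \colon C[a,t] \to C[a,b]$ defined by $(Eg)(s) = g(s)$ for $s \leq t$ and $(Eg)(s) = g(t)$ for $s > t$. For $g, h \in C[a,t]$ and $\varepsilon \neq 0$, since $Eg + \varepsilon Eh = E(g + \varepsilon h)$ extends $g + \varepsilon h$, well-definedness yields
\begin{equation*}
  \frac{f_t(g + \varepsilon h) - f_t(g)}{\varepsilon} = \left( \frac{f(Eg + \varepsilon Eh) - f(Eg)}{\varepsilon} \right)\!\restriction_{[a,t]}.
\end{equation*}
If $f$ is Gâteaux differentiable at $Eg$, the right-hand side converges in $C[a,b]$, hence uniformly on $[a,t]$, so $f_t$ is Gâteaux differentiable at $g$ with $\gdif f_t(g,h) = \gdif f(Eg, Eh)\restriction_{[a,t]}$ (the formula is automatically independent of the chosen extension, being the limit of a difference quotient of $f_t$ alone). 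If moreover $\gdif f$ is continuous on $C[a,b] \times C[a,b]$, then $\gdif f_t$ is the composition of the continuous maps $(g,h) \mapsto (Eg, Eh)$, $\gdif f$, and restriction to $[a,t]$, hence continuous; thus $f_t$ is continuously Gâteaux differentiable.

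I expect the only real obstacle to sit in the first part: "$Y$ vanishes on $[a,t]$" is not literally "$Y$ is compactly supported in $(t,b]$" — the support of $Y$ may reach down to $t$ — which is precisely why one invokes retardedness at all earlier times $s' < t$ and then uses continuity of $f(X_1)$ and $f(X_2)$ at $t$ to close the gap. Everything else is routine bookkeeping, the constant extension $E$ being the convenient choice because it is linear and norm-preserving (the degenerate endpoint case $t = a$ is immaterial).
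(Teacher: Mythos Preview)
Your proof is correct and follows essentially the same approach as the paper: define $f_t$ via an extension to $C[a,b]$ and observe that differentiability passes through because the Gâteaux difference quotient factors through the extension and the restriction. Your argument is in fact more careful than the paper's, which simply asserts that $f_t(X_t) = f(X)\restriction_{[a,t]}$ is independent of the chosen extension without addressing the endpoint; you correctly identify that $Y = X_1 - X_2$ need only have support in $[t,b]$ rather than $(t,b]$, and close the gap by invoking retardedness at each $s' < t$ and then continuity at $t$.
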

\begin{proof}
  Any function $X_t \in C[a,t]$ can be continuously extended to a function $X \in C[a,b]$.
  We can then define $f_t$ as $f_t(X_t) = f(X) \restriction_{[a,t]}$ independently of the chosen extension.
  From the definition~\ref{def:gateaux} it is then obvious that $f_t$ will be (continuously) Gâteaux differentiable if this is true for $f$.
\end{proof}

Of course the same holds true also for the functional $F$.
From now on we shall denote the restriction of such functionals to smaller intervals by the same symbol as the original functional.

We are now ready to present two propositions which will be used in the proof of the main theorems:

\begin{proposition}\label{prop:closed}
  Suppose that $f$ is bounded on a set $\mathcal{U} \subset C[a,b]$ which also includes a neighborhood $\mathcal{N}$ of $F_0$ defined as $\mathcal{N}=\{ X \mid \norm{X-F_0}_{C[a,b]} <\delta \}$ for some $\delta$.
  Let $\mathcal{U}_t = \{ X \restriction_{[a,t]} \mid X \in \mathcal{U} \}$ be the restriction of $\mathcal{U}$ to $[a,t]$ and $\mathcal{N}_t$ that of $\mathcal{N}$ to $[a,t]$.
  Then there exists $t \in (a,b]$ such that the restriction of $F$ to $C[a,t]$ satisfies $F(U) \subset U$ for all $U \subset \mathcal{U}_{t}$ that contain the neighborhood $\mathcal{N}_t$ of $F_0$.
\end{proposition}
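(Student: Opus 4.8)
The plan is to exploit the Volterra integral structure of $F$: since $F(X)$ differs from $F_0$ only by the integral $\int_a^t f(X)\, \dif s$ of a bounded integrand, choosing $t$ close enough to $a$ forces $F(X)$ into an arbitrarily small uniform ball around $F_0$. Because every admissible $U$ is required to contain $\mathcal{N}_t$, landing in such a ball is enough to land in $U$, and the same $t$ then works for all admissible $U$ at once.

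Concretely, I would first set $M \defn \sup_{X \in \mathcal{U}} \norm{f(X)}_{C[a,b]}$, which is finite by hypothesis, and then fix $t \in (a,b]$ with $M\,(t-a) < \delta$ (any $t \in (a,b]$ if $M = 0$). The retardedness of $f$, via Lemma~\ref{lem:retarded_functional}, is what lets me transport this bound to the shorter interval: given $X \in U \subset \mathcal{U}_t$ there is, by definition of $\mathcal{U}_t$, an extension $\hat X \in \mathcal{U}$ with $\hat X \restriction_{[a,t]} = X$, and $f(X)(s) = f(\hat X)(s)$ for $s \in [a,t]$, hence $\abs{f(X)(s)} \leq M$ on $[a,t]$ independently of the extension chosen. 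Then for every $s \in [a,t]$
\[
  \abs{F(X)(s) - F_0(s)} = \abs*{\int_a^s f(X)(\sigma)\, \dif\sigma} \leq M\,(s-a) \leq M\,(t-a) < \delta,
\]
so $\norm{F(X) - F_0}_{C[a,t]} < \delta$, i.e.\ $F(X) \in \mathcal{N}_t$. Since $\mathcal{N}_t \subset U$ by assumption, this gives $F(X) \in U$, hence $F(U) \subset U$, and the proposition follows.

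The only point requiring care is the well-definedness of the pointwise bound $\abs{f(X)(s)} \leq M$ when $X$ is given only on $[a,t]$: one must check that replacing $X$ by any extension lying in $\mathcal{U}$ does not alter the values of $f$ on $[a,t]$, which is precisely the content of the retarded property together with Lemma~\ref{lem:retarded_functional}. Beyond that, the argument reduces to the elementary estimate that the integral of a uniformly bounded function over a short interval is uniformly small, so I expect no further obstacles.
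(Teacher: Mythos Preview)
Your proof is correct and follows essentially the same approach as the paper: bound $f$ uniformly on $\mathcal{U}$, use the integral structure of $F$ to get $\norm{F(X)-F_0}_{C[a,t]} \leq M\,(t-a)$, and choose $t$ so that this is below $\delta$, whence $F(X) \in \mathcal{N}_t \subset U$. If anything, your version is a bit more explicit than the paper's in invoking Lemma~\ref{lem:retarded_functional} to justify that the uniform bound on $f$ survives passage to the restricted interval.
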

\begin{proof}
  If $f$ is bounded on $\mathcal{U}$, then its restriction to $C[a,t]$ is bounded on $\mathcal{U}_t$.
  That is, for all $X \in \mathcal{U}_t$
  \begin{equation*}
    \norm{f(X)}_{C[a,t]} \leq K_t
    = \sup_{Y \in \mathcal{U}_t} \, \norm{f(Y)}_{C[a,t]}
    \leq \sup_{Y \in \mathcal{U}_t} \, \norm{f(Y)}_{C[a,b]}.
  \end{equation*}
  Then, taking the norm of \eqref{eq:F-def} after subtracting $F_0$, one obtains
  \begin{equation*}
    \norm{F(X)-F_0}_{C[a,t]}
    \leq  (t - a) \norm{f(X)}_{C[a,t]}
    \leq  (t - a) K_t.
  \end{equation*}
  Since $K_t \leq K_{t'}$ for $t < t'$ we can always find a $t$ such that $F(X)$ stays close enough to $F_0$ to be included in $\mathcal{N}_t$ and hence in $\mathcal{U}_t$.
  Assume now that $U \subset \mathcal{U}_t$ contains the neighborhood $\mathcal{N}_t$.
  Then the thesis follows because $\sup_{Y \in U} \, \norm{f(Y)}_{C[a,t]} \leq K_t$.
\end{proof}

\begin{proposition}\label{prop:contraction}
  Let $U \subset \mathcal{U} \subset C[a,b]$ be convex sets with $\mathcal{U}$ open and $U$ closed such that $F(U) \subset U$.
  If $f$ is continuously Gâteaux differentiable on $\mathcal{U}$, then there exists $N \in \NN$ such that $F^n$ is a contraction map on $U$ for all $n \geq N$, that is, there exists $C \in (0,1)$ for all $X, Y \in U$ such that
  \begin{equation*}\label{eq:contraction}
    \norm{F^n(X) - F^n(Y)}_{C[a,b]} \leq C\, \norm{X - Y}_{C[a,b]}.
  \end{equation*}
  In consequence $F$ will have a unique fixed-point in $U$.
\end{proposition}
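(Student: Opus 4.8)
The plan is to exploit the fact that $F$ is a Volterra-type (retarded) operator while $f$ is locally Lipschitz, and then to invoke the version of the Banach fixed-point theorem for iterated maps. Since $U$ is closed in the Banach space $C[a,b]$, it is a complete metric space; hence it suffices to produce an $N$ for which $F^N$ is a contraction on $U$. Once this is done, the Banach fixed-point theorem gives a unique fixed point $X_*$ of $F^N$; because $F$ commutes with $F^N$, the point $F(X_*)$ is again a fixed point of $F^N$ and therefore equals $X_*$, and any fixed point of $F$ is a fixed point of $F^N$, so $X_*$ is the unique fixed point of $F$ as well.

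First I would fix a single Lipschitz constant $K$ for $f$ on $U$. Since $f$ is continuously Gâteaux differentiable on the open convex set $\mathcal{U}\supseteq U$ and $U$ is convex, the segment joining any two points of $U$ stays in $\mathcal{U}$, so the argument of Proposition~\ref{prop:lipschitz} applies and yields $\norm{f(X)-f(Y)}_{C[a,b]}\le K\,\norm{X-Y}_{C[a,b]}$ for $X,Y\in U$, with $K=\sup_{X\in U}\opnorm{\gdif f(X)}$. Next, using the retarded structure of $f$ together with Lemma~\ref{lem:retarded_functional} (the restriction $f_s$ to $C[a,s]$ is continuously Gâteaux differentiable, with derivative the restriction of $\gdif f$, hence with Lipschitz constant at most $K$), I would sharpen this to the \emph{time-local} bound
\[
  \abs{f(X)(s)-f(Y)(s)}\;\le\;K\,\norm{X-Y}_{C[a,s]}\qquad\text{for all }s\in[a,b],\ X,Y\in U .
\]

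With this in hand the core is a routine Picard-type iteration. Writing $F(X)(t)-F(Y)(t)=\int_a^t\big(f(X)(s)-f(Y)(s)\big)\,\dif s$ and using $F(U)\subset U$ to keep all iterates inside $U$, an induction on $n$ (each integration contributing one more factor $(t-a)$ over the running index, and using that the right-hand side is nondecreasing in $t$) gives
\[
  \norm{F^{n}(X)-F^{n}(Y)}_{C[a,t]}\;\le\;\frac{\big(K(t-a)\big)^{n}}{n!}\,\norm{X-Y}_{C[a,b]}
  \qquad(X,Y\in U,\ t\in[a,b],\ n\in\NN).
\]
Setting $t=b$ and recalling $\sum_{n}\big(K(b-a)\big)^{n}/n!=\e^{K(b-a)}<\infty$, the prefactor tends to $0$, so there is $N$ with $\big(K(b-a)\big)^{n}/n!<1$ for all $n\ge N$; thus $F^{n}$ is a contraction on $U$ with constant $C=\big(K(b-a)\big)^{n}/n!$ for every $n\ge N$, and the fixed-point statement follows as described above.

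The step I expect to be the real obstacle is the first one: Proposition~\ref{prop:lipschitz} only provides a \emph{local} Lipschitz property, whereas the argument needs one constant $K$ valid on the whole (possibly non-compact) set $U$. In the applications of this paper $U$ is contained in a bounded set—e.g.\ in $\mathcal{U}[\tau_0,\tau_1]$, on which $\opnorm{\gdif f}$ is controlled by the estimates behind Lemma~\ref{lem:bounds} and Proposition~\ref{prop:gateaux_diffable}—so $K<\infty$; abstractly one either assumes $\opnorm{\gdif f}$ bounded on $U$ or shrinks $U$ (using the freedom in Proposition~\ref{prop:closed}) until it lies in a single Lipschitz neighborhood. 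Everything after that—the retardation refinement and the $1/n!$ produced by repeated integration—is mechanical.
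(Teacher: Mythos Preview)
Your proof is correct and follows essentially the same route as the paper's: a single Lipschitz constant $K$ for $f$, the retarded structure via Lemma~\ref{lem:retarded_functional} to localize in time, and induction to produce the $K^n(t-a)^n/n!$ factor, followed by the standard iterated-map fixed-point argument. The gap you flag---that Proposition~\ref{prop:lipschitz} is only local, yet a global $K$ on $U$ is needed---is real, and the paper handles it in exactly the way you anticipate: it simply sets $K=\sup_{X\in\mathcal{U}}\opnorm{\gdif f(X)}$ without further justification, implicitly relying on the boundedness supplied in the concrete applications.
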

\begin{proof}
  Suppose that, for all $n$ and arbitrary $t \in [a,b]$,
  \begin{equation}\label{eq:Fn-step}
    \norm{F^n(X) - F^n(Y)}_{C[a,t]}
    \leq \frac{K^n (t - a)^n}{n!} \norm{X - Y}_{C[a,t]}.
  \end{equation}
  Setting $t = b$ and choosing a sufficiently large $N$, this implies that the map $F^n$ is contractive for every $n\geq N$.
  Since $U$ is a complete metric space (it is a closed subset of a Banach space) and $F(U) \subset U$, we can apply Banach's fixed point theorem for $F^{n}$ and $F^{n+1}$ to find that also $F$ has a unique fixed point in $U$.

  We can show the statement~\eqref{eq:Fn-step} using an inductive procedure.
  Applying proposition~\ref{prop:lipschitz} and lemma~\ref{lem:retarded_functional}, we find that $f$ is locally Lipschitz as a functional on $\mathcal{U}$ and its restrictions $\mathcal{U}_t = \{ Z \restriction_{[a,t]} \mid Z \in \mathcal{U} \}$ with common Lipschitz constant $K = \sup_{X \in \mathcal{U}} \opnorm{\gdif f(X)}$.
  Using the uniform norm on $C[a,t]$, we thus obtain
  \begin{equation*}
    \norm{F(X) - F(Y)}_{C[a,t]}
    \leq \int_a^t \norm{f(X) - f(Y)}_{C[a,t]}\, \dif s
    \leq K (t - a) \norm{X - Y}_{C[a,t]},
  \end{equation*}
  which proves \eqref{eq:Fn-step} for $n = 1$.
  Suppose now that \eqref{eq:Fn-step} holds for $n$.
  Then,
  \begin{align*}
    \abs{F^{n+1}(X)(t) - F^{n+1}(Y)(t)}
    & \leq \int_a^t \norm*{f\big(F^n(X)\big) - f\big(F^n(Y)\big)}_{C[a,s]}\, \dif s \\
    & \leq K \int_a^t \norm*{F^n(X) - F^n(Y)}_{C[a,s]}\, \dif s \\
    & \leq \frac{K^{n+1}}{n!} \int_a^t (s - a)^n \norm{X - Y}_{C[a,s]}\, \dif s \\
    & \leq \frac{K^{n+1} (t - a)^{n+1}}{(n+1)!} \norm{X - Y}_{C[a,t]}
  \end{align*}
  and, since the estimate on the right-hand side is monotonically increasing in $t$,
  \begin{equation*}
    \norm{F^{n+1}(X) - F^{n+1}(Y)}_{C[a,t]} \leq \frac{K^{n+1} (t - a)^{n+1}}{(n+1)!} \norm{X - Y}_{C[a,t]},
  \end{equation*}
  which implies that \eqref{eq:Fn-step} holds also for $n+1$, thus concluding the proof.
\end{proof}

%----------------------------------------------------------------------------%

\small

\end{document}